\renewcommand{\algorithmcfname}{ALGORITHM}
\newcommand{\E}{\mathbb{E}}
\newcommand{\rev}{\mathcal{R}}
\newtheorem{defn}[theorem]{Definition}
\newtheorem{observation}[theorem]{Observation}
\newtheorem{conj}[theorem]{Conjecture}
\def\squareforqed{\hbox{\rule{2.5mm}{2.5mm}}}
\def\QED{\ifmmode\squareforqed 
  \else{\nobreak\hfil   
    \penalty50                 
    \hskip1em                  
    \null                      
    \nobreak                   
    \hfil                      
    \squareforqed              
    \parfillskip=0pt           
    \finalhyphendemerits=0     
    \endgraf}                  
  \fi}
\def\blksquare{\rule{2mm}{2mm}}
\def\qedsymbol{\blksquare}
\newcommand{\bg}[1]{\medskip\noindent{\bf #1}}
\newcommand{\ed}{{\hfill\qedsymbol}\medskip}
\newenvironment{proofof}[1]{{\it{Proof of #1 : }}}{\ed}
\newcommand{\R}{\ensuremath{\mathbb R}}
\newcommand{\I}{\ensuremath{\mathcal I}}
\newcommand{\El}{\ensuremath{\mathcal{L}}}
\newcommand{\comment}[1]{}
 {}
\newcommand{\junk}[1]{}
\newlength{\tmp} \newlength{\lpsx} \newlength{\lpsy} \newlength{\upsx} \newlength{\upsy}
\newcommand{\M}{\ensuremath{\mathcal M}}
\newcommand{\X}{\ensuremath{\mathcal X}}
\newcommand{\x}{\ensuremath{x}}
\newcommand{\xsi}{\ensuremath{x_i}}
\newcommand{\V}{\ensuremath{\mathcal V}}
\newcommand{\val}{\ensuremath{v}}
\newcommand{\vali}{\ensuremath{v_i}}
\newcommand{\A}{\ensuremath{\mathcal A}}
\newcommand{\Ai}{\ensuremath{\mathcal A}_i}
\newcommand{\ac}{\ensuremath{a}}
\newcommand{\al}{\ensuremath{\mathbf a}}
\newcommand{\xos}{\text{\textsc{XOS}}}
\newcommand{\ground}{\ensuremath{\mathcal{E}}}
\newcommand{\matroid}{\ensuremath{\textsc{M}}}
\newcommand{\poa}{\text{\textsc{PoA}} }
\newcommand{\opt}{\text{\textsc{Opt}} }
\newcommand{\OurTitle}{Greedy Algorithms make Efficient Mechanisms}
\def\runningfoot{\def\@runningfoot{}}
\def\firstfoot{\def\@firstfoot{}}
\begin{document}

\markboth{B. Lucier and V. Syrgkanis}{\OurTitle}

\title{\OurTitle}

\author{Brendan Lucier\affil{Microsoft Research; \texttt{brlucier@microsoft.com}}
Vasilis Syrgkanis\affil{Microsoft Research; \texttt{vasy@microsoft.com}}
}
\date{}

\begin{abstract}
We study mechanisms that use greedy allocation rules and pay-your-bid pricing to allocate resources subject to a matroid constraint.
We show that all such mechanisms obtain a constant fraction of the optimal welfare at any equilibrium of bidder behavior, via a smoothness argument. This unifies numerous recent results on the price of anarchy of simple auctions. Our results extend to polymatroid and matching constraints, and we discuss extensions to more general matroid intersections.
\end{abstract}



\maketitle

%

\section{Introduction}
A principle is tasked with allocating resources to a set of self-interested agents, subject to constraints that determine which allocations are feasible.  The principle (or ``seller'') wishes to allocate resources in a socially efficient manner, with the understanding that the agents will interact with any prescribed mechanism in such a way to maximize their own utilities.  This familiar setup, now ubiquitous in the algorithmic game theory literature, captures a variety of important scenarios: from selling a single good on eBay, to scheduling tasks in a cloud-based system, to running auctions for online advertising space.  A classic approach to this mechanism design problem is to elicit, from each agent, a full description of their preferences over all possible outcomes, then find the welfare-maximizing allocation according to the bids and charge appropriate (truth-incentivizing) payments.  Unfortunately, this optimal mechanism scales poorly as the number of potential outcomes and agents grows, both in the difficulty of the optimization task and in the communication requirements imposed by the protocol.

There has been a significant recent line of work studying the performance of so-called ``simple'' mechanisms.  These auctions are typically characterized by restricted bidding languages, simple greedy allocation rules, and straightforward payment computations -- practical considerations that can be more desireable than optimality or truthfulness.  Examples include the GSP auction for sponsored search ads \cite{Caragiannis2014}, simultaneous item auctions for selling indivisible items to buyers with submodular valuations \cite{Christodoulou2008,Hassidim2011}, uniform-price auctions for selling many homogeneous goods \cite{Markakis2012},  position auctions with externalities \cite{RT12} and many others.  These mechanisms are not truthful nor efficient, but each has been shown to achieve a constant fraction of the optimal welfare at every Nash equilibrium of bidder behavior.  

Recent works \cite{Syrgkanis2013,Syrgkanis2012,Roughgarden2012} showed that each of these mechanisms satisfies a natural best-response property, termed \emph{smoothness}, and that this property implies approximate efficiency at equilibrium.  Moreover, smoothness not only guarantees high efficiency, but this guarantee is robust in many ways: it extends to learning behavior on the part of the bidders, uncertainty on the parameters of the game, simultaneous and sequentially occurring mechanisms, and budget constraints. On the other hand, truthful mechanisms tend to not have such \emph{composability} properties; one might go so far as to say that incentive compatible mechanisms \emph{overfit} the mechanism design setting by treating it in isolation and not taking into account strategic influences and variations external to the mechanism.

The smoothness property, while eminently useful, is a semantic concept.  It is based on an existential property of a given mechanism, and does not directly give algorithmic guidelines about what mechanisms are smooth or how one should design them.  This is reminiscient of truthfulness, which is also a semantic property; it becomes far more useful and usable when paired with a descriptive algorithmic condition, such as optimality (as in the VCG mechanism) or monotonicity (for single-parameter problems).  
\emph{Can we give analogous, useful characterizations of algorithmic conditions that guarantee smoothness?}

A common feature in each of the smooth mechanisms described above is the greediness of the allocation rule.  Indeed, an intuition that arises from this line of work is that greedy algorithms lend themselves well to auction design, in the sense that they generate smooth mechanisms that necessarily have good performance at equilibrium.
However, to this point, there is no result of satisfying generality that links greediness with smoothness.  The arguments establishing smoothness for each mechanism are subtley tailored to their specific contexts, despite the intuition of a more general principle.
%
The contribution of this work is to formalize this intuition and unify the aforementioned results; to show that, in a general sense, greedy auctions are smooth.
\vsedit{For instance, we unify results on the efficiency of simultaneous item auctions \cite{Christodoulou2008,Hassidim2011}, uniform price auction \cite{Markakis2012} and position auctions with externalities \cite{RT12} in a single argument (albeit with slightly worse constants).}

Specifically, we show that if a greedy allocation rule is used to allocate resources subject to a matroid constraint, and buyers have submodular\footnote{Our results actually hold for the more general class of fractionally subadditive preferences.} preferences over the resources, then the resulting mechanism is smooth and will achieve a constant fraction of the optimal welfare at Nash equilibrium.  In other words, the mechanism has constant price of anarchy.
We then show how each of the examples described above, plus others, fall within this framework.  Unsurprisingly, the constants we obtain for the general results are not as tight as the constants achieved via direct analysis of specific auction instances.  Nevertheless, we view our general result as providing insight into the structural properties of mechanisms that lead to reasonable approximation guarantees.



\paragraph{Challenges and Techniques}

To illustrate the barriers to a general connection between greediness and smoothness, let us first describe a standard smoothness argument typical of the recent literature on simple auctions.  Consider the following very simple setting: the sale of a single item via first-price, sealed-bid auction.  This is a greedy allocation rule subject to the (very simple) matroid constraint that at most one bid can be chosen.  Suppose agent $i$ has the maximum value $v_i$ for the item, and let $B = (b_1, \dotsc, b_n)$ be a bid profile at equilibrium.  We wish to argue that the welfare generated at this equilibrium is close to $OPT$, which is $v_i$.  To see this, imagine what would happen if agent $i$ deviated to bidding half of his true value, $v_i/2$.  Agent $i$ either wins under this deviation, or he does not.  If he does, then his utility would be $v_i/2$.  If he does not, then there must exist some other agent $j$ that bids more than $v_i/2$, and hence pays at least $v_i/2$ at equilibrium.  Either way, we can conclude that agent $i$'s utility under the deviation, plus the sum of all agents' payments at equilibrium, is at least $v_i/2$.  This is precisely the smoothness condition.  To deduce a price of anarchy bound, note that since bid profile $B$ forms an equilibrium, agent $i$'s utility at equilibrium must be at least his utility under the deviation.  So we can conclude that the sum of agent utilities \emph{at equilibrium}, plus the sum of payments at equilibrium, is at least $v_i/2$.  Since the two sums add up to the social welfare at equilibrium, we conclude that the welfare at equilibrium is at least $v_i/2$, half of the optimal welfare.

The smoothness condition is effectively a charging argument.  If agent $i$'s utility at equilibrium is low, relative to his welfare in the optimal outcome, then we aim to find an agent $j$ against whose payment we can charge agent $i$'s loss in utility.  In many cases, like the one above, the agent to charge against is obvious: for example, the winner of an item that was ``supposed'' to go to $i$ in the optimal allocation.  However, more generally, finding an appropriate charging method is not as straightforward, as our next example illustrates.  One technical challenge in extending the argument to arbitrary matroids is to show how to construct such mappings in general.

\begin{figure}[t]
\centering
\subfigure[example with a graphical matroid constraint.]
{\label{f:graphical}%
{
\includegraphics[width=1.3in]{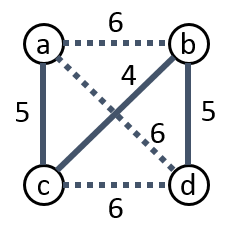}
}
}
\caption{Sample input instance to a greedy mechanism.  Elements are edges, and weights represent agent values.  Each solid edge is being bid on truthfully, each dashed edge has a bid of $0$.  The feasibility constraint is a graphical matroid (allocation cannot contain cycles).}
\label{fig:example}
\end{figure}

Consider the example in Figure \ref{f:graphical}, where elements are the edges in a graph and the feasibility constraint is the graphical matroid; that is, a set is feasible if it contains no cycles.  Each edge is labeled with a value, and is being bid on by a different agent.  The optimal outcome is $\{ab,ad,cd\}$, but the bid profile (where dashed lines get bid $0$) results in outcome $\{ac,bc,bd\}$.  If the agent bidding on edge $ab$ raised his bid to half of its value, $3$, he would lose this bid; to which equilibrium payment should this loss be charged?  A first instinct would be to charge against edge $ad$, since that edge sets the critical threshold for whether a bid on $ab$ would win.  However, such a charging scheme would also charge losing bids for edges $ad$ and $cd$ to edge $ac$ as well, leading to poor smoothness bounds.  To avoid such collisions, one must take a more global view, taking into account all deviations by all players.  In this example, one could charge $ab$ to $ac$, $cd$ to $bd$, and $ad$ to $bc$.  To build such a global charging method for general matroids, we use the Rota Exchange property of matroids, which guarantees the existence of an appropriate mapping on which to build a charging scheme.

%

\paragraph{Our Results}

Our main result is that any greedy mechanism that allocates resources subject to a matroid constraint has price of anarchy at most $3$ via a smoothness argument.

\bigskip
\noindent
\textbf{Main Theorem.} {\em
Suppose agents desire subsets of a ground set, which can be allocated subject to a matroid constraint.  Suppose further that agent valuations over elements are fractionally subadditive.  Then a mechanism that elicits separate bids on the ground set elements, then allocates greedily by bid and charges pay-your-bid payments, has Price of Anarchy at most $3$.
}\bigskip


Roughly speaking, a mechanism is smooth if, for every valuation profile $v$, there is an action profile $a^*$ such that, for all bid profiles $b$, the sum of all agent payments (under $b$) plus the sum of agent utilities (under unilateral deviations from $b$ to $a^*$) is at least a constant factor of the optimal social welfare. To generalize previous smoothness arguments to arbitrary matroid environments, we must find a general method to charge losing bids (under deviation $a^*$) to agent payments (under $b$), as described above.  
We make use of the well-known Rota exchange property of matroids to build the necessary mappings.  Mapping in hand, we employ a method of declared-welfare maximization recently developed in \cite{Babaioff2014}, paired with the composition framework of \cite{Syrgkanis2013}, to establish the claimed smoothness condition.


We next show that our result can be extended to handle polymatroid constraints.  A formal definition of polymatroids appears in Section \ref{sec.polymatroid}.  Roughly speaking, 
a polymatroid 
allows fractional allocations of goods, subject to submodular constraints on the total allocated quantities of different subsets of goods.  
We actually provide two different forms of this extension.  First, if each agent's value for a good is linear in the amount of the good she receives, then we consider a greedy mechanism that asks agents to report their per-unit value of each good, then allocates the goods greedily.  
%
Second, we consider more general 
valuations that can be arbitrary submodular functions over the space of feasible allocations; in particular, they needn't be linear for a single good.  For this setting, we modify our mechanism to extend the bidding language slightly: it elicits from each agent a bid on each element of the ground set, where the bid includes a declared value per unit of the element, plus a maximum amount of the element that the agent is willing to accept.  One can then consider the greedy allocation rule, which maximizes resources greedily subject to the polymatroid contraint, up to the declared caps.  
For both extensions, our analysis applies and we show that the greedy mechanism has a price of anarchy of at most $3$.

Our main theorem and extensions can be applied to a variety of auction settings, such as sponsored search auctions and combinatorial auctions.  We discuss these applications in Section \ref{sec:applications}.

We then turn to matching constraints, where the elements for sale are edges in a bipartite graph and an allocation is feasible only if no two allocated edges are coincident.  This is a special case of the intersection of two matroids.  For this case we again show a constant price of anarchy via smoothness analysis, though this proof is more technically involved and we obtain a bound of $8$.\footnote{
The technical difficulties we encounter bear some similarities to the independent work of Kesselheim et al. \cite{Kesselheim2014} who  consider an auction setting that involves matroid constraints. They consider a greedy mechanism on the intersection of a “unit”-partition matroid and a separate matroid constraint for each player.  This result is incomparable to ours, as we provide bounds for greedy algorithms on the intersection of two “unit”-partition matroids, i.e. matchings.
}

We conjecture that our results extend to the intersection of two arbitrary matroids, and perhaps even to the intersection of $k$ matroids (with a price of anarchy bound of $O(k)$).  As progress toward understanding matroid intersections, we show that if the feasibility constraint is the intersection of $k$ matroids and the \emph{optimal} allocation rule is used on the bids (rather than the greedy rule), then the price of anarchy is indeed $O(k)$.  We emphasize here that this analysis is very different from that of different ``optimal'' allocation methods, such as the VCG mechanism.  In particular, the allocation rule is being applied to a restricted bidding space, where agents must place separate bids on each item; the allocation returned is the optimal allocation \emph{for the bids}, which necessarily describe only additive valuations.

\paragraph{Other Related Work}

Our results are similar in spirit to Borodin and Lucier \cite{Lucier2010}, who also consider the price of anarchy of greedy mechanisms.  They show that for a class of combinatorial auction problems, any $c$-approximate greedy algorithm has price of anarchy $O(c)$ when paired with a first-price payment rule.  Our results are incomparable to theirs: they do not impose requirements on the feasibility constraints or the agents' valuation classes, but they allow only greedy algorithms in which agents bid directly on their outcomes (e.g., a package of items in a combinatorial auction).  Since our analysis doesn't impose this last requirement, it applies to a wider array of greedy mechanisms, such as combinatorial auctions that allocate individual items or generalized sponsored search auctions that don't require customers to bid directly on slots.

%
%

Babaioff, Lucier, Nisan, and Paes Leme \cite{Babaioff2014} give a general price of anarchy analysis for ``declared welfare maximizing'' mechanisms, which collect bids (from a potentially limited bidding space) and then allocate optimally with respect to the bids.  Our greedy mechanism for matroids is a declared welfare maximizer, but our analysis applies to a more general space of feasibility conditions; the analysis of \cite{Babaioff2014} is specific to combinatorial auction constraints.  That said, portions of our proofs apply techniques in a spirit similar to \cite{Babaioff2014}.

Bikhchandani et al. \cite{Bikhchandani2011} design an ascending-price Vickrey-style auction for selling elements of a matroid.  Their setting is identical to our own, except that it is assumed that agent values are additive over the elements they receive.  They show that their mechanism is welfare-optimal and has a variety of desireable incentive properties.  Our work extends their setting to the case of more general valuations, where greedy allocation rules are not necessarily incentive compatible.

\section{Preliminaries}

A principal wants to decide an allocation of resources to $n$ players. The principal has to choose an allocation $x=(x_1,\ldots,x_n)$ among a set of feasible allocation vectors $\X$ that is a subset of a product space of allocations $\X\subseteq \X_1\times \ldots\times \X_n$. We assume that the principal can also ask the players to pay for their allocation and thereby he also needs to decide a payment vector $p=(p_1,\ldots,p_n)\in \R_+^n$. 

Each player $i$, has a valuation function that maps an allocation to some non-negative real number: $v_i:\X_i\rightarrow \R_+$. We will denote with $\V_i$ the set of allowed valuations for player $i$ and with $\V=\V_1\times\ldots\times \V_n$ the set of allowed valuation profiles. If a player is given allocation $x_i$ and is asked to pay $p_i$, then her utility is quasi-linear  with respect to money: 
$u_i(\xsi, p_i; \vali) = \vali(\xsi) - p_i.$

\begin{defn}[Mechanism] 
A mechanism $\M$ is a tuple $(\A,X,P)$, where $\A=\A_1\times\ldots\times \A_n$ and $\Ai$ is a set of actions available to player $i$, $X:\A\rightarrow \Delta(\X)$ is an allocation function that maps each action profile $\ac=(\ac_1,\ldots,\ac_n)$ to a distribution over feasible allocation vectors 
and $P:\A\rightarrow \Delta(\R^n_+)$ is a payment function that maps each action profile to a distribution over payment vectors. 
\end{defn}

We will denote with $X_i$ and $P_i$ the $i$-th coordinate of the allocation and payment functions respectively and with $U_i^{\M}: \Delta(\A)\times \V_i \rightarrow \R$ the expected utility of player $i$ from mechanism $\M$.  That is,
$U_i^{\M}(\al; v_i) =  \E_{\al, X_i(\al), P_i(\al)}\left[v_i\left(X_i(\al)\right)-P_i\left(\al\right)\right]$.
We will also denote the expected revenue of the mechanism as
$\rev^\M(\al) = \sum_{i\in [n]} \E_{\al,P_i(\al)}[P_i(\al)]$.
\paragraph{Efficiency} We will denote with $SW:\A\times \V \rightarrow \R_+$, the social welfare produced by the mechanism under some action profile:
$SW^{\M}(a; v)=\sum_{i\in [n]} \E_{X_i(a)}[\val_i\left(X_i(a)\right)].$
For any valuation profile $\val\in \V$ we will denote with $\x^*(v)$ the optimal allocation, i.e. the allocation that maximizes welfare over all feasible allocations $\x\in \X$ and we will denote with
$\opt(\val)= \sum_{i\in [n]} \val_i(\x_i^*(\val))$.

\begin{defn}[Smooth Mechanism \cite{Syrgkanis2013}]\label{def:smooth-mech}
A mechanism $\M$ is $(\lambda,\mu)$-smooth for some $\lambda,\mu\geq 0$, if for any valuation profile $v\in \V$ and for each player $i\in [n]$ there exists a randomized action $\al_i^*(v)$, such that for any action profile $a\in \A$:
\begin{equation}
\textstyle{\sum_{i\in [n]} U_i^\M(\al_i^*(v),a_{-i}; v_i)\geq \lambda\cdot \opt(v)- \mu\cdot \rev^\M(a)}
\end{equation}
\end{defn}

\begin{defn}[Smooth Mechanism via Swap Deviations]\label{defn:swap-smooth}
A mechanism $\M$ is $(\lambda,\mu)$-smooth via swap deviations if for any valuation profile $v\in \V$, there exists a mapping $\al_i^*(v,\cdot): \A_i\rightarrow \Delta(\A_i)$, such that for any action profile $a\in \A$:
\begin{equation}
\textstyle{\sum_{i\in [n]} U_i^\M(\al_i^*(v,a_i),a_{-i}; v_i)\geq \lambda \opt(v)- \mu \rev^\M(a)}
\end{equation}
\end{defn}

We refer the reader to \cite{Syrgkanis2013} for properties of smooth mechanisms. Roughly if a mechanism is $(\lambda,\mu)$-smooth, then every equilibrium has social welfare at least $\frac{\lambda}{\max\{1,\mu\}}$ of the optimal (i.e. the price of anarchy ($\poa$) is at most $\frac{\max\{1,\mu\}}{\lambda}$) and this extends to no-regret learning outcomes, incomplete information, simultaneous and sequentially occurring mechanisms and budget constraints.

\paragraph{Combinatorial Allocation Spaces and Greedy Mechanisms.}
Consider the following instantiation of the mechanism design setting: the allocation space $\X_i$ of each player $i$ consists of the power set of a finite set of elements $\ground_i$ which we will refer to as the ground elements of player $i$.  Hence, each bidder $i$'s valuation is a set function $v_i:2^{\ground_i}\rightarrow \R_+$ which maps an allocation $S_i\subseteq \ground_i$ to a value $v_i(S_i)$. We will be primarily interested in the case where this function is $\xos$, i.e. can be expressed as a maximum over a set of additive valuations (see Lehmann et al. \cite{Lehmann2001}), which is a superset of submodular valuations.

We denote with $\ground=\ground_1\cup \ldots \cup \ground_n$ the set of all ground elements. The outcome of the mechanism is a subset $S\subseteq \ground$ of this ground set and thereby the allocation of each player is $S_i= S\cup \ground_i$. We assume that there is some feasibility constraint $F\subseteq 2^\ground$ defined on $\ground$, which defines which subsets $S\subseteq \ground$ are feasible. The outcome of the mechanism is restricted to fall within $F$. 

\paragraph{Greedy Mechanism on Reported Bids} We consider the following mechanism: from each player $i$, the auctioneer solicits bids $b_t$ for each $t\in \ground_i$, i.e. $\A_i=\R_+^{|\ground_i|}$. We will denote with $a_i=(b_t)_{t\in \ground_i}$ an action of player $i$ and with $b=(b_t)_{t\in \ground}$ a bid profile on all elements. The auctioneer runs the greedy algorithm on the reported bid profile to decide which elements of $\ground$ are to be picked, i.e. elements are considered in decreasing ordered of bids and each element is added to the outcome as long as it is feasible. Each player is asked to pay his bid for each of his elements in his allocation.

\section{Smoothness for Matroid Feasibility Constraints}

We now proceed to the main result of the paper. Consider the case where the feasibility constraint $F$, is the collection $\I$ of independent sets of a matroid $\matroid=(\ground,\I)$ (see Schrijver \cite{Schrijver2003} for an extensive exposition of matroids), defined on the ground set. We show that the greedy mechanism on reported bids is a $\left(\frac{1}{3},1\right)$-smooth mechanism. 
To show the smoothness property we will heavily use an exchange property of matroid feasibility constraints, proved by Lee et al. \cite{Lee2010}.
\begin{lemma}[Generalized Rota Exchange \cite{Lee2010}]\label{lem:exchange} Let $\matroid = (\ground, I)$ be a matroid and $A, B\in \I$. Let $A_1, . . . , A_n$ be subsets of $A$ such that each element of $A$ appears in exactly $q$ of them. Then there are sets $B_1, . . . , B_m \subseteq  B$ such that each
element of $B$ appears in at most $q$ of them, and for each $i$, $A_i\cup (B/\ B_i) \in \I$.
\end{lemma}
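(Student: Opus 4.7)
The plan is to combine the classical symmetric exchange theorem for matroid bases with a global assignment argument that enforces the pointwise multiplicity bound on the $B_i$'s.

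First I would assume without loss of generality that $A$ and $B$ are bases of $\matroid$, reducing to this case by extending to bases via the augmentation axiom and restricting $\matroid$ to $A \cup B$. Then Brualdi's symmetric exchange theorem, applied to $A,B$ with $X = A_i$, produces a candidate $B_i \subseteq B$ with $|B_i| = |A_i|$ and $A_i \cup (B \setminus B_i) \in \I$, so each exchange clause holds individually. A simple count gives $\sum_i |B_i| = \sum_i |A_i| = q|A| \le q|B|$, hence the multiplicity bound ``each $f \in B$ is in at most $q$ of the $B_i$'s'' holds only on average, not pointwise.

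To upgrade the average bound to a pointwise one, I would formulate the problem as selecting, for each $i$, a set $B_i$ from the family $\mathcal{E}_i$ of all valid exchange sets for $A_i$, subject to the capacity cap. Fractionally this is easy: uniform averaging over $\mathcal{E}_i$ gives expected usage at most $q|A|/|B| \le q$ per element. I would look for an integer solution either via a polymatroid intersection argument (viewing indicator vectors of valid exchange sets as points in the base polytope of $\matroid$ restricted to $B$ and using submodularity of the rank function to exhibit an integer vertex) or, more concretely, via iterative local exchange.

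The main obstacle is this rounding step. The cleanest attack I see is local exchange: start with the Brualdi $B_i$'s, and while some $f \in B$ lies in more than $q$ of them, pick such an $i$ and swap $f$ out of $B_i$ for some $f' \notin B_i$ of low current multiplicity, using a secondary symmetric exchange inside the contracted matroid $\matroid / A_i$ to guarantee that $A_i \cup (B \setminus B_i')$ remains independent after the swap. A potential function measuring the total excess of element-multiplicity above $q$ should strictly decrease with each adjustment, forcing termination at a configuration satisfying the pointwise bound. The uniform multiplicity hypothesis $\sum_i \chi_{A_i} = q \chi_A$, combined with the base-exchange property, is what guarantees that a legal augmenting swap always exists whenever the pointwise bound is violated.
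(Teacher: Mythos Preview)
The paper does not give its own proof of this lemma; it is quoted from Lee, Sviridenko and Vondr\'ak and used as a black box in the smoothness arguments. So there is nothing in the paper to compare your sketch against.

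On the sketch itself: the reduction to bases and the construction of \emph{some} valid $B_i$ for each $A_i$ (via iterated single-element exchange, which is what the set form of Brualdi's theorem amounts to) are correct, and you have correctly isolated where the real work lies---a naive choice of the $B_i$'s satisfies the multiplicity cap only on average. The gap is the repair step. A single local swap $f\mapsto f'$ inside $B_i$ must keep $A_i\cup(B\setminus B_i')$ independent, which forces $f'$ to lie in the fundamental circuit of $f$ with respect to $B\setminus B_i$ in $\matroid/A_i$; nothing you have written shows that this circuit must contain an element whose current multiplicity is below $q$. When it does not, a single swap cannot decrease the excess, and one is forced into an augmenting-path argument threading through several of the $B_j$'s---which is essentially the matroid-intersection min--max condition in disguise, and is precisely the nontrivial content of the lemma. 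Your closing sentence, that ``the uniform multiplicity hypothesis combined with the base-exchange property guarantees that a legal augmenting swap always exists,'' is a restatement of what has to be proved rather than a proof of it. The polymatroid-intersection route you mention in passing is closer to a workable argument, but it too needs the rank inequality to be checked, not just asserted.
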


\begin{theorem}
\label{thm.xos}
The greedy mechanism on reported bids is a $(\frac{1}{3},1)$-smooth mechanism when the valuation of each player is $\xos$. Hence, every equilibrium achieves at least $1/3$ of the optimal welfare. 
\end{theorem}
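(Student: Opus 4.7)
The plan is to verify Definition \ref{def:smooth-mech} directly, constructing, for every valuation profile $v$, a deterministic deviation $a_i^*(v)$ and comparing the resulting utilities to the revenue of an arbitrary bid profile $a$.

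Choice of deviation. Since $v_i$ is $\xos$, there is an additive supporting clause $w_i^*:\ground_i\to\R_+$ for $v_i$ at the optimal allocation $S_i^*=x_i^*(v)$, i.e.\ $v_i(S_i^*)=\sum_{t\in S_i^*}w_i^*(t)$ and $v_i(S)\geq\sum_{t\in S}w_i^*(t)$ for every $S\subseteq\ground_i$. The deviation $a_i^*(v)$ bids $w_i^*(t)/2$ on each $t\in S_i^*$ and $0$ on every other element of $\ground_i$. The reason for halving is standard: it leaves a factor of $1/2$ of each element's value as surplus if it is won, which is what we will charge against payments for elements that are lost.

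Per-player utility and Rota charging. Fix an arbitrary profile $a$, and let $W_i\subseteq S_i^*$ be the set of elements player $i$ receives under the unilateral deviation $(a_i^*(v),a_{-i})$. Monotonicity together with the supporting-clause property yields
\[
U_i^{\M}(a_i^*(v),a_{-i};v_i)\ \geq\ v_i(W_i)-\tfrac{1}{2}\sum_{t\in W_i}w_i^*(t)\ \geq\ \tfrac{1}{2}\sum_{t\in W_i}w_i^*(t).
\]
Let $T=S(a)$ be the greedy outcome of the original profile and $S^*=\bigcup_iS_i^*$. After padding $T$ and $S^*$ to bases of $\matroid$ with zero-bid dummy elements, I invoke Lemma~\ref{lem:exchange} with $A=T$, $B=S^*$, and singleton subsets $A_t=\{t\}$ (so $q=1$); a size count forces each of the resulting sets $B_t$ to be a singleton and thus yields a bijection $\sigma:T\to S^*$ with $\{t\}\cup(S^*\setminus\{\sigma(t)\})\in\I$. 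Equivalently, swapping any element of $T$ with its image in $S^*$ preserves independence, which is exactly the matroid-level structure needed to run a charging argument on bids.

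Charging lost optimum to revenue. The heart of the argument is to show that for every $s\in S^*\setminus\bigcup_iW_i$ owned by some player $i$, the bid on its preimage satisfies $a_{\sigma^{-1}(s)}\geq w_i^*(s)/2$. I do this by examining the greedy execution under the deviation $(a_i^*(v),a_{-i})$: if greedy rejects $s$ at bid $w_i^*(s)/2$ then some already-chosen element of equal or higher bid blocks it, and the bijection $\sigma$, together with the matroid-greedy optimality of $T$ on $a$ (greedy is optimal for matroid weighted bases), lets me transport this blocker to the original profile and identify it with $\sigma^{-1}(s)\in T$. Since $\sigma$ is a bijection, the inequalities aggregate without any element of $T$ being charged twice:
\[
\sum_{s\in S^*\setminus W}w^*(s)\ \leq\ 2\sum_{t\in T}a_t\ =\ 2\,\rev^{\M}(a).
\]

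Combining and finishing. Adding the per-player utility bound across $i$ and substituting the charging inequality produces, after standard rearrangement and the constant slack incurred when $s\in S^*\cap T$ must be double-counted (once as a potential win and once as a potential blocker), the inequality $\sum_iU_i^{\M}(a_i^*(v),a_{-i};v_i)\geq\tfrac{1}{3}\opt(v)-\rev^{\M}(a)$, which is the claimed $(1/3,1)$-smoothness. The price-of-anarchy consequence of $3$ then follows from the smoothness framework of \cite{Syrgkanis2013}. The main obstacle is the bid-comparison step: Rota's exchange is a purely combinatorial guarantee, while the charging needs a pointwise comparison of bids across two profiles that differ precisely on player $i$'s strategy; it is exactly the need to reconcile these two greedy executions that costs the last constant factor and produces $\lambda=1/3$ instead of the naive $1/2$.
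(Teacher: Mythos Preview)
Your charging step is the gap. The claim ``for every $s\in S_i^*\setminus W_i$, $a_{\sigma^{-1}(s)}\geq w_i^*(s)/2$'' does not follow from the Rota bijection you construct, and in fact is false for a generic such bijection. Take the uniform rank-$2$ matroid on $\{1,2,3,4\}$, player~1 owning $\{1,2,3\}$ with $w_1=100$, $w_2=80$, $w_3=40$, player~2 owning $\{4\}$; set $a=(0.001,0.002,0.003,45)$, so $T=\{4,3\}$. Under player~1's deviation the bids are $(50,40,0,45)$ and greedy selects $\{1,4\}$, so $2\in S_1^*\setminus W_1$. If the Rota bijection happens to send $3\mapsto 2$ (perfectly valid for the uniform matroid), then $a_{\sigma^{-1}(2)}=a_3=0.003\not\geq 40$. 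Your ``transport the blocker'' sentence does not supply an argument: the element blocking $s$ in the \emph{deviation} execution could be another element of $S_i^*$ that player~$i$ himself bid up, or an element outside $T$ altogether; neither the exchange property $\{t\}\cup(S^*\setminus\{\sigma(t)\})\in\I$ nor greedy optimality of $T$ on the \emph{undeviated} weights $a$ lets you identify it with $\sigma^{-1}(s)$. Note also that if your pointwise charging \emph{did} hold, summing would give $(1/2,1)$-smoothness, not $(1/3,1)$; your closing paragraph waves at a ``double-counting'' correction but never produces it.

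The paper avoids pointwise bid comparisons entirely. It applies Lemma~\ref{lem:exchange} in the \emph{other} direction---with $A=S^*$ partitioned as $\{S_i^*\}_i$ and $B=S$---to obtain disjoint sets $T_i\subseteq S$ with $S_i^*\cup(S\setminus T_i)\in\I$. The comparison is then purely aggregate: optimality of greedy on the deviated profile gives $W(S',a')\geq W(S_i^*\cup(S\setminus T_i),a')$, and optimality on the original profile gives $W(S,a)\geq W(S',a)$. Subtracting and summing over $i$ charges each element of $S$ at most twice (once via some $T_i$, once via $S\cap\ground_i$), which is exactly where the factor $2$ and hence $\lambda=1/3$ comes from. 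The $\xos$ case is then handled by Lemma~\ref{lem:sum-to-max} (max over additive), which plays the same role as your supporting-clause reduction.
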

We will use the following Lemma, which is a re-interpretation of the results of \cite{Syrgkanis2013}):
\begin{lemma}[Syrgkanis and Tardos \cite{Syrgkanis2013}]\label{lem:sum-to-max} If a mechanism is $(\lambda,\mu)$-smooth for a class of valuations $\V$, then the same mechanism is also $(\lambda,\mu)$-smooth for the class of valuations $\max-\V$, defined as all valuations $v$, that can be expressed as the maximum over a set of valuations from $\V$: $v(S)=\max_{\ell\in \El} v^{\ell}(S)$, for some arbitrary potentially infinite index set $\El$.
\end{lemma}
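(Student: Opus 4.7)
The plan is to reduce smoothness for $\max\text{-}\V$ to smoothness for $\V$ by, for each $\max\text{-}\V$ valuation profile $v$, constructing a witness profile $\hat v \in \V^n$ that (i) agrees with $v$ on the optimal allocation and (ii) is pointwise dominated by $v$. The deviations for the $\max\text{-}\V$ instance will be defined to be the deviations guaranteed by smoothness applied to $\hat v$.

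Concretely, fix $v \in (\max\text{-}\V)^n$, so each $v_i(S) = \max_{\ell \in \El_i} v_i^\ell(S)$ with $v_i^\ell \in \V$. Let $x^*(v)$ be the welfare-optimal feasible allocation under $v$, and for each player $i$ pick an index $\ell_i^* \in \El_i$ such that $v_i^{\ell_i^*}(x_i^*(v)) = v_i(x_i^*(v))$ (if the supremum is not attained, take $\ell_i^*$ within $\epsilon/n$ of the sup and let $\epsilon \to 0$ at the end). Define $\hat v_i := v_i^{\ell_i^*} \in \V$, so that $\hat v \in \V^n$. Apply the hypothesis that $\M$ is $(\lambda,\mu)$-smooth on $\V$ to obtain randomized deviations $\hat a_i^*(\hat v)$ satisfying the smoothness inequality for the profile $\hat v$. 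Now \emph{define} the deviation for the $\max\text{-}\V$ profile to be $a_i^*(v) := \hat a_i^*(\hat v)$.

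The verification uses two elementary observations. First, since $\hat v_i \in \{v_i^\ell : \ell \in \El_i\}$, we have $v_i(S) \geq \hat v_i(S)$ for every $S$; because utility is valuation minus payment and the payment depends only on the action profile (not on $v_i$), this gives
\begin{equation*}
U_i^\M(a_i^*(v), a_{-i}; v_i) \;\geq\; U_i^\M(\hat a_i^*(\hat v), a_{-i}; \hat v_i)
\end{equation*}
for every $a$. Second, the allocation $x^*(v)$ is feasible and achieves welfare $\sum_i \hat v_i(x_i^*(v)) = \sum_i v_i(x_i^*(v)) = \opt(v)$ under $\hat v$, so $\opt(\hat v) \geq \opt(v)$. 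Summing the first inequality over $i$ and invoking smoothness on $\hat v$ then yields
\begin{equation*}
\sum_i U_i^\M(a_i^*(v), a_{-i}; v_i) \;\geq\; \lambda \opt(\hat v) - \mu\, \rev^\M(a) \;\geq\; \lambda \opt(v) - \mu\, \rev^\M(a),
\end{equation*}
which is exactly the $(\lambda,\mu)$-smoothness condition for $v$.

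The only real subtlety is the case where the index set $\El_i$ is infinite and the maximum in the definition of $v_i$ is not attained at $x_i^*(v)$. This is handled by the $\epsilon$-approximation mentioned above: pick $\ell_i^*$ with $v_i^{\ell_i^*}(x_i^*(v)) \geq v_i(x_i^*(v)) - \epsilon/n$, which yields $\opt(\hat v) \geq \opt(v) - \epsilon$ and hence the smoothness bound with an additive $\lambda\epsilon$ slack; letting $\epsilon \to 0$ recovers the clean inequality. Everything else is bookkeeping: the key conceptual move is that ``max'' valuations let each player privately commit to a single linear (or $\V$-class) representative that matches their contribution to $\opt$, reducing the problem to the already-assumed smoothness on $\V$.
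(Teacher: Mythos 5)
Your proof is correct and is essentially the same argument as the one in the cited reference \cite{Syrgkanis2013} (the paper itself only states this lemma and defers to that reference): select for each player the component valuation attaining the max at her optimal allocation, apply smoothness to the resulting profile in $\V$, and use pointwise domination plus $\opt(\hat v)\geq\opt(v)$ to transfer the inequality. The $\epsilon/n$ handling of a non-attained supremum is a correct and welcome extra detail.
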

Since $\xos$ are by definition valuations that can be expressed as a maximum over a set of additive valuations, it suffices to show that the mechanism is smooth for additive valuations, which we show in the following Lemma and which is the main technical part of the theorem.

\begin{lemma}\label{thm:poa-greedy}
The greedy mechanism on reported bids is a $(\frac{1}{3},1)$-smooth mechanism when the valuation of each player is additive.
\end{lemma}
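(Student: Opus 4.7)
The plan is to establish $(\tfrac{1}{3},1)$-smoothness by designing a deviation that bids $v_t/3$ on each of the player's welfare-optimal elements and charging each $t$'s loss against either player $i$'s own utility or the revenue via a matroid-exchange based argument. Fix an additive profile $v$ and let $A=\x^*(v)$ denote the welfare-optimal independent set, with $A_i = A\cap \ground_i$. Define the deviation $\al_i^*(v)$ to bid $v_t/3$ on each $t\in A_i$ and $0$ on each $t\in \ground_i\setminus A_i$. For an arbitrary bid profile $b$, let $b'=(\al_i^*(v),b_{-i})$ and let $B'$ be the greedy outcome under $b'$; since matroid greedy returns the maximum-weight independent set, $B'$ maximizes $b'(S)$ over $S\in\I$.

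Each $t\in A_i\cap B'$ is won by player $i$ at payment $v_t/3$, contributing $2v_t/3$ to her utility. For each $t\in A_i\setminus B'$, I invoke Lemma~\ref{lem:exchange} on $A$ and $B'$ with the singleton choice $A_r=\{r\}$ and $q=1$, obtaining an injective partial map $\phi':A\setminus B'\to B'$ with $\{t\}\cup(B'\setminus\{\phi'(t)\})\in\I$. The max-weight property of $B'$ under $b'$ now forces $b'_{\phi'(t)}\ge b'_t = v_t/3$.

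I then split on where $\phi'(t)$ lands. \emph{Case A}: $\phi'(t)\in A_i$. Then $b'_{\phi'(t)}=v_{\phi'(t)}/3\ge v_t/3$, hence $v_{\phi'(t)}\ge v_t$, so player $i$'s utility on $\phi'(t)\in A_i\cap B'$ is $2v_{\phi'(t)}/3\ge v_{\phi'(t)}/3+v_t/3$; I assign $v_{\phi'(t)}/3$ to the direct share of $\phi'(t)$ and $v_t/3$ to the charge from $t$, with injectivity of $\phi'$ preventing any element of $A_i\cap B'$ from absorbing two charges. \emph{Case B}: $\phi'(t)\notin A_i$, so $b'_{\phi'(t)}=b_{\phi'(t)}\ge v_t/3$. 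A second application of matroid exchange (Lemma~\ref{lem:exchange} applied to $B'$ and $B$, combined with the observation that the swap-out element can be chosen inside $B\setminus B'$ whenever $\phi'(t)\in B'\setminus B$, because $(B\cap B')\cup\{\phi'(t)\}\subseteq B'$ is independent and hence the unique circuit in $\{\phi'(t)\}\cup B$ must meet $B\setminus B'$) produces an injection from the Case B elements into $B$ along which bids are only non-decreasing, using max-weight optimality of $B$ under $b$. Hence the aggregate Case B charge is at most $\rev(b)$.

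Summing the utility contributions from $A\cap B'$ and from Case A, together with the revenue charges from Case B, covers every $t\in A$ with a share of $v_t/3$. This yields $\sum_i U_i^{\M}(\al_i^*(v),b_{-i};v_i)+\rev(b)\ge \opt(v)/3$, which is precisely the $(\tfrac{1}{3},1)$-smoothness inequality. The main obstacle is Case A: once Rota forces $\phi'(t)$ back into player $i$'s own optimal set, nothing in $\rev(b)$ can absorb the charge, so one unit of her utility must simultaneously pay for two of her optimal elements. The bid level $v_t/3$ is exactly what balances this constraint (which requires $2v_{\phi'(t)}/3\ge v_{\phi'(t)}/3+v_t/3$) against the revenue-charge constraint in Case B (which requires the deviation bid itself to be at least $v_t/3$); raising the bid fraction would break Case A while lowering it would break Case B.
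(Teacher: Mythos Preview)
Your argument has a genuine gap in the aggregation step for Case~B. The outcome $B'$, and hence the map $\phi'$ and the subsequent exchange into $B$, all depend on which player $i$ is deviating; write $B'^{(i)}$, $\phi'^{(i)}$ to make this explicit. For each fixed $i$ your two-step construction does give an injection from player $i$'s Case~B elements $C_i\subseteq A_i$ into $B$, but nothing coordinates these injections across players: a Case~B element $t\in C_i$ and another $t'\in C_j$ with $i\neq j$ can both land on the same element of $B$, since the exchanges from $B'^{(i)}$ and $B'^{(j)}$ into $B$ are built independently. Consequently the claim that ``the aggregate Case~B charge is at most $\rev(b)$'' is not established; what your construction actually yields is only $U_i + \rev(b)\ge \sum_{t\in A_i}v_t/3$ for each $i$, which after summing gives $(\tfrac13,n)$-smoothness rather than $(\tfrac13,1)$. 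A direct global exchange on $(A,B)$ does not rescue this either, because it loses the $v_t/3$ lower bound, which in your argument comes only from optimality of the player-dependent $B'^{(i)}$.

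The paper avoids this by applying the Rota exchange \emph{once}, globally, to the pair $(S^*,S)$ --- both independent of the deviating player --- with the partition $\{S_i^*\}_{i=1}^n$ of $S^*$. This produces a single family of disjoint sets $T_1,\dots,T_n\subseteq S$, so that $\sum_i\sum_{t\in T_i}b_t\le \sum_{t\in S}b_t=\rev(b)$ holds by construction. The per-player inequality then comes from comparing the greedy optimum $S'$ under the deviation against the feasible set $S_i^*\cup(S\setminus T_i)$, combined with optimality of $S$ under the original bids; no per-element charging or case split is needed. The decisive move is to perform the exchange against the common equilibrium outcome $S$, not against the player-dependent deviation outcome.

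A secondary issue: Lemma~\ref{lem:exchange} with the singleton choice $A_r=\{r\}$ yields disjoint \emph{sets} $B_r\subseteq B'$ with $\{r\}\cup(B'\setminus B_r)\in\I$, not a single-element map. You would need a bijective basis-exchange statement to obtain the injective $\phi'$ you invoke. This is repairable, but the cross-player coordination gap above is not, within your charging framework.
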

\begin{proof}
Consider an additive valuation profile $v$, i.e. $v_i(S_i)=\sum_{t\in S_i}w_t$.  Let $S^*$ be the optimal base for valuation profile $v$ and $S_i^*=S^*\cap \ground_i$, be player $i$'s allocation in the optimal base. Suppose that each player $i$ deviates to $a_i^*=\left(\frac{w_t}{\alpha}\right)_{t\in \ground_i}$.

Consider an action profile $a$, where $a_i=(b_t)_{t\in \ground_i}$, and let $S$, be the selected set under action profile $a$. Let $a'=(a_i^*,a_{-i})$, be the induced action profile and $S'$ be the set allocated after the deviation and $S_i'=S'\cap \ground_i$. 

We denote with $W(S,a)=\sum_{t\in S} b_t$, the. By Lemma \ref{lem:exchange} for $q=1$, we have that  there exist disjoint sets $T_1,\ldots, T_n$ of $S$, such that $Q=S_i^* \cup (S - T_i)\in \I$. By optimality of the greedy algorithm on the reported bid profile and since $Q$ is feasible, we have: 
\begin{align*}
 \sum_{t\in S_i'}\frac{w_t}{\alpha} + \sum_{t\in S'-S_i'}b_t\geq~& \sum_{t\in S_i^*}\frac{w_t}{\alpha} + \sum_{t\in S-T_i- \ground_i}b_t + \sum_{t\in (S\cap\ground_i) -T_i}\frac{w_t}{\alpha}
\geq~ \sum_{t\in S_i^*}\frac{w_t}{\alpha} + \sum_{t\in S-T_i-\ground_i}b_t
\end{align*}
By optimality of the algorithm on the initial bid profile we have:
\begin{equation*}
\textstyle{\sum_{t\in S}b_t=W(S,a) \geq W(S',a) = \sum_{t\in S'}b_t \geq \sum_{t\in S'-S_i'}b_t}
\end{equation*}
Combining we get:
\begin{align*}
 \sum_{t\in S_i'}\frac{w_t}{\alpha} \geq~& \sum_{t\in S_i^*}\frac{w_t}{\alpha} + \sum_{t\in S-T_i- \ground_i}b_t -  \sum_{t\in S}b_t
=~\sum_{t\in S_i^*}\frac{w_t}{\alpha} - \sum_{t\in S\cap (T_i \cup \ground_i)}b_t \\
\geq~&\sum_{t\in S_i^*}\frac{w_t}{\alpha} - \sum_{t\in S\cap T_i }b_t - \sum_{t\in S\cap \ground_i}b_t
\end{align*}
Observe that by definition the utility of the player under the deviation is: $U_i^\M(a_i^*,a_{-i}; v_i) = \left(1-\frac{1}{\alpha}\right)\sum_{t\in S_i'}w_t$. Using the previous inequalities we can lower bound his utility as follows:
\begin{align*}
U_i^\M(a_i^*,a_{-i}; v_i) =~& \left(1-\frac{1}{\alpha}\right)\sum_{t\in S_i'}w_t\\
\geq~&\left(1-\frac{1}{\alpha}\right)\sum_{t\in S_i^*}w_t -\left(1-\frac{1}{\alpha}\right) \cdot \alpha\cdot\left(\sum_{t\in S\cap T_i }b_t + \sum_{t\in S\cap \ground_i}b_t\right)
\end{align*}
Summing over all players:
\begin{align*}
\sum_i U_i^\M(a_i^*,a_{-i}; v_i) \geq~&\left(1-\frac{1}{\alpha}\right)\opt(v) - \left(1-\frac{1}{\alpha}\right) \cdot \alpha\cdot\left(\sum_i \sum_{t\in S\cap T_i}b_t +\sum_{i}\sum_{t\in S\cap \ground_i} b_t\right)\\
\geq~& \left(1-\frac{1}{\alpha}\right)\opt(v)  -(\alpha-1)\cdot 2\sum_{t\in S}b_t
\end{align*}
where the last inequality follows, since $T_i$ are disjoint sets and thereby $\sum_{i} \sum_{t\in S\cap T_i} b_t\leq \sum_{t\in S}b_t$.
By setting $\alpha = \frac{1}{2}+1=\frac{3}{2}$, yields the result.
\end{proof}

\subsection{Action Space Restrictions and Extension to Polymatroids}
\label{sec.polymatroid}

We examine the generalization of Theorem \ref{thm.xos} to polymatroids. In a polymatroid setting, each element $t\in \ground_i$ corresponds to a divisible good. The allocation $\X_i = \R_+^{|\ground_i|}$ of a player is the vector of allocated units from each element $t\in \ground_i$: $x_i=(x_t)_{t\in \ground_i}$.  The mechanism chooses a vector of allocated units of each element: $x=(x_t)_{t\in \ground}\in \R_+^{|\ground|}$. This vector has to satisfy a polymatroid constraint: for any $S\subseteq \ground$, $\sum_{t\in S}x_t \leq f(S)$, and $f(\cdot)$ is a monotone submodular function with $f(\emptyset)=0$. 

The valuation of a player is linear across elements and is homogeneous for each element, i.e. $v_i(x_i) = \sum_{t\in \ground_i} w_t\cdot x_t$. We show that the greedy mechanism, described in Mechanism \ref{mech:polymatroid}, is smooth.
\renewcommand{\algorithmcfname}{Mechanism}
\begin{algorithm*}[h]\label{mech:polymatroid}
\SetKwInOut{Input}{Input}\SetKwInOut{Output}{Output}
\BlankLine
\nl From each player $i$ solicit bids $b_t$ for each $t\in \ground_i$. Denote with $a_i=(b_t)_{t\in \ground_i}$ and $b=(b_t)_{t\in \ground}$\\
\nl Run the greedy polymatroid algorithm with weights $b$ to decide the final allocation $x$, i.e. at each iteration pick element $t$ from remaining with maximum $b_t$ and increase $x_t$ until some polymatroid constraint becomes tight. Then remove $t$ from consideration.\\
\nl Charge each player $i$, $\sum_{t\in \ground_i} b_t\cdot x_t$.
\caption{Polymatroid mechanism.}
\label{mech:position-per-impression}
\end{algorithm*}
\renewcommand{\algorithmcfname}{ALGORITHM}

\begin{theorem}\label{thm:polymatroid}
The polymatroid mechanism is $\left(\frac{1}{3},1\right)$-smooth when players have linear and homogeneous valuations. 
(proof in Appendix \ref{sec:app-matroids})
\end{theorem}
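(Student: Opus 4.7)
The plan is to adapt the proof of Lemma~\ref{thm:poa-greedy} to the polymatroid setting, using the same deviation $a_i^* = (w_t/\alpha)_{t \in \ground_i}$ and the same final choice $\alpha = 3/2$. The substantive change is to replace the Rota exchange lemma (Lemma~\ref{lem:exchange}) by a polymatroid analogue.

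The exchange statement to establish is the following: for any two feasible polymatroid vectors $x$ and $x^*$, there exist nonnegative vectors $\tau_1, \ldots, \tau_n \in \R_+^{|\ground|}$, with each $\tau_i$ supported on $\ground \setminus \ground_i$ and satisfying $\sum_i \tau_i \leq x$ componentwise, such that for each $i$ the ``swap'' vector $y_i$, defined by $y_i(t) = x^*_t$ for $t \in \ground_i$ and $y_i(t) = x_t - \tau_i(t)$ otherwise, is itself a feasible polymatroid vector. A clean way to prove this is by element splitting: replace each ground element $t$ with $M$ identical copies for a large integer $M$, which turns the (scaled) polymatroid into a matroid on the copied ground set; apply Lemma~\ref{lem:exchange} with $q = 1$ to this matroid; and take $M \to \infty$, using compactness of the box $\{\tau : 0 \leq \tau \leq x\}$ to extract a convergent subsequence whose limit inherits the desired properties.

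Given the exchange lemma, I would proceed exactly as in Lemma~\ref{thm:poa-greedy}. Applying the classical fact of Edmonds that the polymatroid greedy algorithm is optimal for linear objectives to the deviation-bid weights (which are $w_t/\alpha$ for $t \in \ground_i$ and $b_t$ otherwise), using $y_i$ as the competitor, and combining with the original-bid optimality $\sum_t b_t x_t \geq \sum_t b_t x'_t$, one obtains
\begin{equation*}
\sum_{t \in \ground_i} \tfrac{w_t}{\alpha}\, x'_t \;\geq\; \sum_{t \in \ground_i} \tfrac{w_t}{\alpha}\, x^*_t \;-\; \sum_{t \notin \ground_i} b_t\, \tau_i(t) \;-\; \sum_{t \in \ground_i} b_t\, x_t,
\end{equation*}
the polymatroid counterpart of the middle displayed inequality in the matroid proof. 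Since $U_i^\M(a_i^*, a_{-i}; v_i) = (1 - 1/\alpha) \sum_{t \in \ground_i} w_t\, x'_t$, multiplying by $\alpha(1 - 1/\alpha)$ and summing over $i$ yields
\begin{equation*}
\textstyle \sum_i U_i^\M(a_i^*, a_{-i}; v_i) \;\geq\; (1 - \tfrac{1}{\alpha})\, \opt(v) \;-\; (\alpha - 1) \bigl( \sum_i \sum_{t \notin \ground_i} b_t\, \tau_i(t) + \sum_i \sum_{t \in \ground_i} b_t\, x_t \bigr).
\end{equation*}
The bound $\sum_i \tau_i \leq x$ (together with each $\tau_i$ being supported on $\ground \setminus \ground_i$) gives $\sum_i \sum_{t \notin \ground_i} b_t \tau_i(t) \leq \rev^\M(a)$, and disjointness of the $\ground_i$ gives $\sum_i \sum_{t \in \ground_i} b_t x_t = \rev^\M(a)$; the bracketed term is thus at most $2\,\rev^\M(a)$, and choosing $\alpha = 3/2$ yields $(\tfrac{1}{3}, 1)$-smoothness.

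The main obstacle is the polymatroid exchange lemma itself. The element-splitting reduction is natural but requires care when $x$ and $x^*$ are non-rational (one approximates, takes integer scalings, and passes to a subsequential limit) and some bookkeeping to ensure both $\sum_i \tau_i \leq x$ and feasibility of each $y_i$ survive in the limit. An alternative, if cleaner, route would be a direct proof via exchange axioms on base polytopes of polymatroids.
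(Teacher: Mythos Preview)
Your proposal is correct, but it takes a different route from the paper. Both arguments rely on discretizing the polymatroid into a matroid and on the smoothness deviation $a_i^* = (w_t/\alpha)_{t\in\ground_i}$, but they use discretization at different levels.

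The paper's proof discretizes the entire mechanism: it views the polymatroid mechanism as the limit, as $\delta\to 0$, of the greedy matroid mechanism on the copy-expanded ground set, but with the action space restricted so that each player must place the \emph{same} bid on all copies of a given element. The paper then makes a generic observation: if a mechanism is $(\lambda,\mu)$-smooth and one restricts the action space to a subset that still contains the smoothness deviations $\al_i^*(v)$, the restricted mechanism remains $(\lambda,\mu)$-smooth. Since the deviation in Lemma~\ref{thm:poa-greedy} is a scaled copy of the (homogeneous) valuation, it lies in the restricted space, and smoothness of the discretized mechanism, hence of the polymatroid limit, follows immediately from Lemma~\ref{thm:poa-greedy} without redoing any inequalities.

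You instead localize the discretization to the combinatorial core: you only use it (plus compactness) to establish a polymatroid analogue of the Rota exchange, producing vectors $\tau_i$ with $\sum_i \tau_i \le x$ and feasible swaps $y_i$. You then rerun the Lemma~\ref{thm:poa-greedy} computation directly in the continuous setting, invoking Edmonds' optimality of polymatroid greedy in place of matroid greedy. This is a bit more work (you must re-verify the chain of inequalities in the polymatroid world and handle the rational-approximation limit carefully), but it yields a standalone polymatroid exchange statement that may be of independent use. The paper's route is slicker and more modular; yours is more constructive. Both are sound.
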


The proof of Theorem \ref{thm:polymatroid} is based on the observation that, by discretizing the allocation space, we can view the polymatroid mechanism as the limit of a greedy matroid mechanism, where the players are restricted to submit the same bid on all units derived from the discretization of a good. Then the theorem follows essentially from Theorem \ref{thm:poa-greedy} by making the following generic observation on smooth mechanisms: if we restrict the action space of a $(\lambda,\mu)$-smooth mechanism, such that the smoothness deviations fall within the restricted action space, then the restricted mechanism is still $(\lambda,\mu)$-smooth. Since the smoothness deviation of Theorem \ref{thm:poa-greedy} is simply a scaled version of a player's true value, and the value of a player is homogeneous for each good, we get that this deviation remains in the restricted space of the discretized matroid mechanism, proving smoothness of each discretized version and thereby of the limit polymatroid mechanism. 

\paragraph{Submodular valuations} Suppose that each player's value $v_i(x_i)$ is a monotone submodular function on the euclidean lattice defined on $\R_+^{|\ground_i|}$. Then from \cite{Syrgkanis2013}, we know that it can be expressed as: 
$v_i(x_i) = \max_{\ell \in \El} \sum_{t\in \ground_i} v_t(x_t)$,
with $v_t(\cdot)$ being an increasing concave function. Since, it is easy to see that every increasing concave function can be expressed as the maximum of functions that are linear up to a point and then constant: i.e. $v_t(x_t) =\max_{\ell \in \El} w_t^{\ell}\min\{ x_t,  q_t^{\ell}\}$, we can conclude that any submodular valuation can be written as
$v_i(x_i) = \max_{\ell \in \El} \sum_{t\in \ground_i}  w_t^{\ell}\cdot \min\{x_t, q_t^{\ell}\}$.
for some index set $\El$. Thus in order to prove smoothness of the polymatroid mechanism, by Lemma \ref{lem:sum-to-max} it suffices to show smoothness for the following much simpler class of valuations:
$v_i(x_i) = \sum_{t\in \ground_i}  w_t\cdot \min\{x_t, q_t\}$.


To render the polymatroid mechanism smooth,
we introduce the following modification of the polymatroid mechanism under which the player can also submit allocation capacities for each good.  

\renewcommand{\algorithmcfname}{Mechanism}
\begin{algorithm*}[h]\label{mech:polymatroid-caps}
\SetKwInOut{Input}{Input}\SetKwInOut{Output}{Output}
\BlankLine
\nl From each player $i$ solicit a bid $b_t$ and a capacity $q_t$ for each $t\in \ground_i$. Denote with $a_i=(b_t)_{t\in \ground_i}$ and $b=(b_t)_{t\in \ground}$ and $q=(q_t)_{t\in \ground}$\\
\nl Run the greedy polymatroid algorithm with weights $b$ and capacities $q_t$ to decide the final allocation $x$, i.e. at each iteration pick element $t$ from remaining with maximum $b_t$ and increase $x_t$ until some polymatroid constraint becomes tight or $x_t$ reaches $q_t$. Then remove $t$ from consideration.\\
\nl Charge each player $i$, $\sum_{t\in \ground_i} b_t\cdot x_t$.
\caption{Polymatroid mechanism with capacities.}
\label{mech:position-per-impression2}
\end{algorithm*}
\renewcommand{\algorithmcfname}{ALGORITHM}


\begin{corollary} 
\label{cor.polymatroid.2}
The polymatroid mechanism with capacities is $\left(\frac{1}{3},1\right)$-smooth when players have submodular valuations.
(proof in Appendix \ref{sec:app-matroids})
\end{corollary}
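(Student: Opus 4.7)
The plan is to follow the two-step reduction outlined immediately before the corollary statement. First, the paper's decomposition shows that any monotone submodular lattice valuation can be written as $v_i(x_i) = \max_{\ell \in \El} \sum_{t\in \ground_i} w_t^\ell \min\{x_t, q_t^\ell\}$, so Lemma \ref{lem:sum-to-max} reduces the corollary to proving $(1/3,1)$-smoothness of Mechanism \ref{mech:polymatroid-caps} for the single-index class $v_i(x_i) = \sum_{t\in \ground_i} w_t \min\{x_t, q_t\}$. I will exhibit a smoothness deviation for this class and inherit the rest of the analysis from Lemma \ref{thm:poa-greedy} via discretization, in the same spirit as Theorem \ref{thm:polymatroid}.

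Second, I would discretize the polymatroid as in the sketch after Theorem \ref{thm:polymatroid}: for granularity $\Delta>0$, replace each good $t$ by $\lceil 1/\Delta\rceil$ unit copies, turning the polymatroid into a matroid on units and the greedy polymatroid allocation into the greedy matroid allocation. Under this expansion, the capped-linear valuation becomes the additive valuation that assigns weight $w_t \Delta$ to each of the first $\lceil q_t/\Delta\rceil$ unit copies of good $t$ and weight $0$ to all further copies. An action in the capacity-augmented mechanism consisting of a per-unit bid $b_t$ and a cap $q_t$ corresponds exactly to the restricted matroid action that bids $b_t \Delta$ on the first $\lceil q_t/\Delta\rceil$ unit copies of good $t$ and $0$ on the rest.

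Third, I would invoke the general "restriction principle" used to prove Theorem \ref{thm:polymatroid}: if the smoothness deviation of a $(\lambda,\mu)$-smooth mechanism happens to lie inside a restricted action space, then the restricted mechanism is still $(\lambda,\mu)$-smooth. The smoothness deviation produced by Lemma \ref{thm:poa-greedy} for an additive valuation with weights $\wb$ is simply the scaled bid vector $\wb/\alpha$ with $\alpha = 3/2$. Applied to the discretized additive valuation above, this deviation bids $(w_t\Delta)/\alpha$ on the first $\lceil q_t/\Delta\rceil$ unit copies of $t$ and $0$ on the rest, which is precisely an action in the capacity-augmented bid language (per-unit bid $w_t/\alpha$ with cap $q_t$). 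Hence Lemma \ref{thm:poa-greedy} plus the restriction principle gives $(1/3,1)$-smoothness of each discretized mechanism, and passing to the limit $\Delta\to 0$ transfers the inequality to Mechanism \ref{mech:polymatroid-caps}.

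The main obstacle I anticipate is the limit step: although the discretization argument is standard, one must verify that the allocation, payments, and deviation utilities in the discretized matroid mechanism converge to their continuous polymatroid counterparts so that the smoothness inequality survives taking $\Delta\to 0$. This reduces to the standard fact that the greedy polymatroid algorithm on a bid vector is the $\Delta\to 0$ limit of greedy matroid allocations on its unit-expansion, and that the capped-linear valuation evaluated on these allocations likewise converges. Crucially, no approximation error is incurred by the cap itself, since the capacity argument of the mechanism exactly matches the cutoff $q_t$ of the valuation; only the polymatroid-to-matroid unitization is approximate, and it converges uniformly in the data of the instance.
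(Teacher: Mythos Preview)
Your proposal is correct and follows essentially the same route as the paper: reduce to capped-linear valuations via Lemma~\ref{lem:sum-to-max}, discretize the polymatroid into a matroid so that the capped-linear valuation becomes an additive one with weight $w_t\delta$ on the first $q_t/\delta$ copies and zero thereafter, observe that the scaled-truth deviation of Lemma~\ref{thm:poa-greedy} is expressible in the capacity-augmented bid language (bid $w_t/\alpha$, cap $q_t$), and invoke the restriction principle from the proof of Theorem~\ref{thm:polymatroid}. Your write-up is in fact more explicit than the paper's about the restriction principle and the limit $\Delta\to 0$, but the underlying argument is identical.
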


\section{Applications}
\label{sec:applications}

We start by observing that the mechanism design setting with matroid feasibility constraints is the same as the setting analyzed by Bikhchandani et al. \cite{Bikhchandani2011} and therefore it can be applied to the plethora of applications identified there. These include: scheduling matroids of Demange et al. \cite{Demange1986}, uniform matroids which can capture the sale of a homogeneous good to bidders with decreasing marginal values (see e.g. the discriminatory analog of the uniform price auction studied in \cite{Syrgkanis2013}), transversal matroids that can capture pairwise Kidney exchanges \cite{Roth2005}. Additionally, the polymatroid result has applications in the spatially distributed markets of Babaioff et al. \cite{Babaioff2009}. For a more detailed exposition of these applications see Section 5 of \cite{Bikhchandani2011}.

In all these settings, our results shows that the very simple greedy first price mechanism achieves at least $1/3$ of the optimal allocation.  Note that when these examples are discussed in \cite{Bikhchandani2011}, the valuations of players are assumed to be linear; our results apply to these allocations but allow more generally for agents to have XOS valuations over the elements for sale, as in the scheduling example described at the beginning of the paper.

\paragraph{Simultaneous Item Auctions} Another application of our setting is that of the simultaneous first price item-bidding auction. The simultaneous item bidding auction among $m$ items can be seen as matroid mechanism on a matroid of $n\cdot m$  elements, where each element corresponds to a (player, item) pair.  Each player controls the $m$ elements in which he appears, one for each of the items. The feasibility constraint is that, for every item, only one of the elements containing that item can be in the outcome set, which is a matroid feasibility constraint.  Note that the simultaneous first price item-bidding mechanism corresponds to the greedy first price mechanism on this matroid setting.

\paragraph{Ad Auctions Polytope} Our polymatroid result finds interesting application in the case of ad auctions. By Goel et al \cite{Goel2012} the following setting can be represented as a polymatroid: there are multiple keywords and each keyword $j$ has $k$ slots available. Each slot $(j,k)$ is associated with a number of clicks $a_{j,k}$. A set of players is competing for the clicks. Each player has a per-click value $v_i$ and his value is $v_i\cdot x_i$, where $x_i$ is the number of clicks he got, i.e. if $k(i,j)$ is the position he won at keyword $j$, then $x_i=\sum_{j} k_{j,k(i,j)}$. The feasibility constraints in the above setting can be expressed as a polymatroid and therefore the greedy polymatroid mechanism yields a smooth mechanism. Note that the greedy polymatroid mechanism in this setting is to run independent Generalized First Price auctions for each keyword. 
%
In fact, our mechanism can handle the case where players have different valuations for different keywords, i.e. $\sum_{j} v_{ij} a_{j, k(i,j)}$. In this case, each keyword is a different coordinate of the polymatroid.  The mechanism again corresponds to running a Generalized First Price auction, but now the players can submit a bid $b_{ij}$ for each keyword $j$.

Our polymatroid mechanism with capacities can also have applications if the valuations of the players are submodular over the allocated clicks from each keyword. However, the mechanism would not correspond to running GFP on each keyword.  Instead, the mechanism would asks also for capacities of the players and then randomize over the allocation of positions so as in expectation a player is never allocated more than his declared click capacity.

\paragraph{Ad Auctions with Externalities}
As an illustration of the power of the polymatroid formulation, we revisit a model of sponsored search ad auctions with externalities studied by Roughgarden and Tardos \cite{RT12}.  Like the Ad Auction setting above, there are $k$ slots; for simplicity assume there is only one keyword.  Each advertiser has a per-click value $v_i$, but also a click probability $p_i$ and a continuation probability $q_i$.  Rather than presuming that each slot has a click probability, these probabilities arise as follows: given an assignment of ads to slots, a user will begin scanning the list of ads from slot $1$ to slot $k$.  Upon encountering an ad $i$, say in slot $j$, it will click the ad with probability $p_i$.  Also, with probability $q_i$ the user will continue scanning to the next slot; otherwise, with probability $1 - q_i$ the user will terminate and not view any subsequent ads.  Thus, if we write $\alpha(j)$ for the ad assigned to slot $j$, then the total value of an ad $i$ assigned to slot $\ell$ is $v_i p_i \prod_{j < \ell} q_{\alpha(j)}$.

In \cite{RT12} it is shown that an auction that assigns ads to slots greedily by $\frac{v_i p_i}{1 - q_i}$ and charges threshold \vsedit{payments has a price of anarchy of at most 4.  We will describe how to interpret this setting as a polymatroid auction, and our results will then imply a price of anarchy of $3$ for a first price variant. Our result price of anarchy of $3$ for polymatroid auctions, though phrased for first price payments, can also be easily adapted to give a price of anarchy of at most $6$ for any second price variant, assuming players do not bid above their valuation.\footnote{A fairly standard, in the literature, modification in the proof of our main theorem to show that the auction is weakly $(1/3,1)$-smooth as defined in \cite{Syrgkanis2013}, which implies a bound of $6$ on the price of anarchy} Thus our polymatroid result directly implies a constant approximation for the auction analysed in \cite{RT12}.}  

Denote with $x_i$ the probability that ad $i$ is the last ad scanned.  Then advertiser $i$ obtains expected value $\frac{v_i p_i}{1 - q_i} \cdot x_i$.  We can therefore write $V_i = \frac{v_i p_i}{1 - q_i}$, and then view the auction as allocating the probabilities $x_i$ to bidders with modified values $V_i$.  We now claim that the set allocation profiles $\mathbf{x}$ corresponding to valid assignments forms a polymatroid, with constraints given by $\sum_{t \in S}x_t \leq 1 - \prod_{t \in S}q_t$.  To see that the conditions are necessary, note that the probability that any ad in set $S$ is the last scanned is maximized when the ads in $S$ occupy the lowest-indexed slots, in which case the probability is precisely $1 - \prod_{t \in S}q_t$.  For sufficiency, note that any deterministic assignment certainly satisfies these constraints (consider all prefixes of the assignment), and hence any mixture of assignments would as well. \vsedit{Last observe that the function $f(S) = 1- \prod_{t \in S}q_t$ is a monotone submodular function and therefore the polytope defined by these constraints is a polymatroid.}

Since the setting can be framed as a polymatroid auction with values given by $\frac{v_i p_i}{1 - q_i}$, we can conclude that a greedy auction that allocates in order of $\frac{v_i p_i}{1 - q_i}$ is $(\frac{1}{3}, 1)$-smooth.



\section{Smoothness for Matroid Intersections}

We now turn to more complex feasibility constraints. We start by showing that the greedy algorithm leads to a smooth mechanism via swap deviations, for constant $\lambda$ and $\mu$, even when the feasibility constraint corresponds to a matching constraint: i.e. each ground element $t\in \ground$ corresponds to an edge $(u_t,v_t)$ in a bipartite graph $(U,V,\ground)$ which is an intersection of two partition matroids (\cite{Schrijver2003}). 

Then we turn to the intersection of arbitrary $k$ matroids and we show that running the optimal algorithm (which in general is an NP-hard problem) rather than the greedy algorithm yields a smooth mechanism, whose parameters $\lambda$ and $\mu$ degrade with $k$. We conjecture that similar behavior holds for the greedy algorithm, but we leave it as an open question for future research.  
\begin{conj} The $\poa$ is $O(k)$ for the greedy mechanism when the feasibility constraint is the intersection of $k$ matroids.  
\end{conj}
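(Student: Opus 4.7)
The plan is to extend the Rota-exchange argument of Lemma~\ref{thm:poa-greedy} to the $k$-fold matroid intersection defined by $\matroid_1, \ldots, \matroid_k$ on ground set $\ground$, with independent-set families $\I_1, \ldots, \I_k$. Fix an optimal base $S^*$ of the intersection and let $S$ be the greedy outcome on bid profile $a$. The first step is to derive a suitable exchange structure by applying Lemma~\ref{lem:exchange} to each matroid $\matroid_\ell$ \emph{separately}, with $A = S^*$ partitioned as $A_i = S_i^*$, $B = S$, and $q = 1$. This yields, for each $\ell$, sets $T_i^\ell \subseteq S$ such that $S_i^* \cup (S \setminus T_i^\ell) \in \I_\ell$ and each element of $S$ appears in at most one of $T_1^\ell, \ldots, T_n^\ell$. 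Defining $T_i := \bigcup_{\ell=1}^{k} T_i^\ell$, downward closure of each $\I_\ell$ gives $S_i^* \cup (S \setminus T_i) \in \I_1 \cap \cdots \cap \I_k$, so this set is feasible for the intersection, while each element of $S$ lies in at most $k$ of the sets $T_1, \ldots, T_n$.

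With this partition in hand, I would mimic the chain of inequalities in the proof of Lemma~\ref{thm:poa-greedy}, considering the deviation $a_i^* = (w_t/\alpha)_{t \in \ground_i}$ and the greedy outcome $S'$ under $(a_i^*, a_{-i})$. The main obstacle is that Lemma~\ref{thm:poa-greedy} invokes \emph{optimality of greedy on the reported bid profile} to lower-bound the weight of $S'$ by the weight of the witness set $Q = S_i^* \cup (S \setminus T_i)$, and this step fails for matroid intersections since greedy is only a $1/k$-approximation to the maximum-weight common independent set. To patch this, I would replace the global optimality appeal with a local blocking argument: when greedy processes an element $t \in S_i^*$ with bid $w_t/\alpha$, either $t \in S'$, or there exist up to $k$ earlier-selected elements of $S'$ (at most one per matroid, forming a circuit with $t$ in that matroid) whose bids are each at least $w_t/\alpha$. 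Each such earlier element can serve as a charge against $t$, at a cost of a factor $k$ relative to the single-matroid analysis.

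The hardest part will be combining the two sources of $k$-multiplicity cleanly: the Rota partition already allows each element of $S$ to appear in up to $k$ of the $T_i$, and the local blocking argument allows each element of $S'$ to witness up to $k$ different lost elements of $S^*$. Interleaving these carefully and summing over players should yield an inequality of the form $\sum_i \sum_{t \in S \cap T_i} b_t \leq O(k) \cdot \sum_{t \in S} b_t$, so that choosing $\alpha = \Theta(k)$ gives $(\Omega(1/k),1)$-smoothness and hence $\poa = O(k)$. A coarser alternative would be to combine the $O(k)$ price-of-anarchy bound for the \emph{optimal}-allocation first-price mechanism on $k$-matroid intersections (established later in the paper) with the $k$-approximation guarantee of greedy: if the smoothness deviations remain valid when the allocation rule is replaced by a constant-factor approximation, this composition would give an $O(k^2)$ bound as a sanity check, and sharpening it back down to $O(k)$ would then amount precisely to proving the exchange lemma above tightly.
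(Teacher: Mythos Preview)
The statement you are attempting to prove is explicitly left \emph{open} in the paper: it is stated as a conjecture with no proof. So there is no ``paper's own proof'' to compare against. That said, your proposal is worth assessing on its merits and against what the paper \emph{does} prove nearby.

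Your first step---applying Lemma~\ref{lem:exchange} to each $\matroid_\ell$ separately to get $T_i^\ell$, setting $T_i=\bigcup_\ell T_i^\ell$, and observing that $S_i^*\cup(S\setminus T_i)$ is feasible in the intersection with each element of $S$ appearing in at most $k$ of the $T_i$---is exactly the construction the paper uses in Theorem~\ref{thm:opt-intersection} for the \emph{optimal} allocation rule. So on that front you have recovered the paper's argument precisely, and it yields $(\tfrac{1}{k+2},1)$-smoothness for the optimal mechanism with no additional $k$-factor beyond the one coming from the $T_i$'s.

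You have also correctly put your finger on the obstruction to carrying this over to greedy: both inequalities in the chain of Lemma~\ref{thm:poa-greedy}---comparing $W(S',a')$ to $W(Q,a')$ and $W(S,a)$ to $W(S',a)$---use that the allocation rule is optimal on the reported bids, and greedy on a $k$-matroid intersection is only a $1/k$-approximation. This is precisely why the paper proves the result only for the optimal rule and states the greedy case as a conjecture.

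Where your proposal becomes genuinely incomplete is the ``local blocking argument.'' You suggest that when $t\in S_i^*$ is rejected by greedy under $(a_i^*,a_{-i})$, one can charge $t$ to up to $k$ earlier elements of $S'$ that block it. The difficulty is that those blocking elements live in $S'$, not in $S$, and the smoothness inequality must charge against revenue $\sum_{t\in S}b_t$ under the \emph{original} profile $a$. Bridging $S'$ back to $S$ is exactly the technical content of the paper's matching analysis (Section~\ref{sec.matchings}), which requires the elaborate exchange-graph path structure of Lemmas~\ref{lem:matching-path-structure} and~\ref{lem:matching-charging} even for $k=2$ partition matroids. Your sketch does not supply an analogue of this machinery for general $k$-fold intersections, and there is no reason to expect the matching-specific path argument to generalize directly. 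So the proposal identifies the right pieces but does not close the gap; the conjecture remains open.
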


\subsection{Matchings and Greedy Allocation}
\label{sec.matchings}

We present the smoothness theorem for matching feasibility constraints. Unlike the matroid setting where we used the existing machinery of Generalized Rota exchanges to create a charging argument, in this setting there is no analogous machinery for the greedy algorithm. Hence, we construct a charging argument that allows us to show that from a deviation either a player gets high utility or some part of the revenue at equilibrium is high. Moreover, each part of the equilibrium revenue is not charged more than twice by our charging scheme.

\begin{figure}[t]
\centering
\subfigure[example with a matching constraint.]
{\label{f:matching}
{
\includegraphics[width=2.7in]{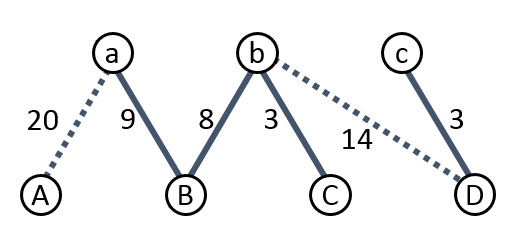}
}
}
\caption{Sample input instance to a greedy mechanisms.  Elements are edges, and weights represent agent values.  Each solid edge is being bid on truthfully, each dashed edge has a bid of $0$.  There is a matching constraint: no edges that share an edge can be simultaneously allocated.}
\label{fig:example2}
\end{figure}

Let us spend a moment to illustrate the difficulty inherent in constructing such mappings, 
Consider the matching instance described in Figure \ref{f:matching}, where the elements are the edges of the graph and the feasibility constraint is that no two allocated edges can share a vertex.  Each edge is labeled with a value, and we consider a bid profile where the bids on dashed edges are set to $0$.  The optimal outcome is $\{aA,bD\}$, and the outcome of the bid profile is $\{aB,bC,cD\}$.  Suppose a single agent, agent $1$, is the one bidding on edges $aA$ and $bD$.  Say agent $1$ were to switch to declaring half of his value for each of those two edges.  Then he would win edge $aA$, but the next edge chosen by the greedy algorithm would be $bB$, and edge $bD$ would not be chosen.  To which edge should the loss of edge $bD$ be charged?  It cannot be edge $bB$, since that edge isn't present in the equilibrium outcome, so its payment is not part of the equilibrium payment.  It also cannot be $bC$ or $cD$, since their payments are not large enough.  In this example, one could charge against edge $aB$, even though that edge doesn't share a vertex with $bD$.  The heart of our analysis of greedy algorithms under a matching constraint is the design of a ``long-range'' charging scheme that generalizes the scenario described in this example.  

We ultimately obtain the following smoothness condition on greedy mechanisms subject to matching constraints.

\begin{theorem}\label{thm:poa-greedy-matching}
The greedy mechanism is $\left(\frac{1}{2},4\right)$-smooth via swap deviations when valuations are additive and the feasibility constraint is a matching constraint.
\end{theorem}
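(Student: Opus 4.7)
The plan is to use the swap deviation in which player $i$ bids $w_t/2$ on every $t \in S_i^*$ and keeps all other coordinates of $a_i$ unchanged. Assuming bidders do not overbid ($b_t \leq w_t$), the utility under this deviation satisfies
\[
U_i^\M(a_i^*(v,a_i), a_{-i}; v_i) \;\geq\; \tfrac{1}{2} \sum_{t \in S_i^* \cap S^{(i)}} w_t,
\]
where $S^{(i)}$ is the greedy outcome under the profile $(a_i^*(v,a_i), a_{-i})$. Summing over $i$ gives $\sum_i U_i \geq \tfrac{1}{2}\bigl(\opt(v) - L\bigr)$, where $L := \sum_{t^* \in S^*\colon\, t^* \notin S^{(i(t^*))}} w_{t^*}$ is the total weight of optimal edges lost by their own deviation, so the theorem reduces to showing $L \leq 8\,\rev^\M(a)$.

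For each such lost $t^*$, I will build a chain of length at most two that terminates at an equilibrium edge. Let $t^{(0)}$ be the edge in $S^{(i(t^*))}$ that blocks $t^*$ at one of its endpoints; since $S^*$ is a matching and only bids on $S_{i(t^*)}^*$ are altered, $t^{(0)}\notin S^*$, and greedy ordering on the deviated profile gives $b_{t^{(0)}}\ge w_{t^*}/2$. If $t^{(0)} \in S$ set $s(t^*) := t^{(0)}$ (the \emph{direct} charge); otherwise $t^{(0)} \notin S$ must have been blocked in the equilibrium run by some edge $s(t^*) \in S$ with $b_{s(t^*)}\ge b_{t^{(0)}} \ge w_{t^*}/2$ (the \emph{indirect} charge). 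In either case $w_{t^*} \leq 2\,b_{s(t^*)}$.

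The core step is a multiplicity bound: every $s \in S$ is charged by at most four lost optimal edges, with at most one direct and one indirect charge at each of $s$'s two endpoints. The direct bound is immediate since $S^*$ contains at most one edge incident to any given vertex. For the indirect charges at an endpoint $v$ of $s$, I will use the ``long-range'' observation illustrated by Figure~\ref{f:matching}: for $t^{(0)}$ to beat $s$ at $v$ in the deviation while $s$ itself lies in the equilibrium matching, $s$ must be displaced in the deviation at its other endpoint $v_2$ by a higher-bid edge whose entry can be traced back to the bid perturbation on $S_{i(t^*)}^*$; the uniqueness of the $S^*$-edge incident to $v_2$ then caps the number of such $t^*$'s at one per endpoint. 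Combined with ratios of $2$, this yields $L \le 4 \cdot 2\,\rev^\M(a)$.

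The main obstacle is rigorously formalizing this cascade. Because greedy on matching is not globally optimal, one cannot invoke the Rota-type exchange used in Theorem~\ref{thm:poa-greedy}, and the argument must instead reason step-by-step through the greedy execution, handling edge cases where $s \in S^*$ (so $s$'s own bid changes in some deviation) and where several $S^*$-edges of the same player perturb simultaneously. Carefully exploiting the bipartite matching structure --- each vertex sees at most one edge from $S$ and at most one from $S^*$ --- is what ultimately pins down the ``long-range'' charging and yields the constant $\mu = 4$, and hence the claimed $(\tfrac{1}{2},4)$-smoothness.
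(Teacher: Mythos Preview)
Your overall plan --- the half-value swap deviation and a charging map from lost optimal edges to equilibrium winners --- matches the paper's approach, and your utility bound and two-step chain $t^* \to t^{(0)} \to s(t^*)$ are set up correctly. The gap is in the multiplicity argument.

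The specific claim that breaks is that the number of indirect charges at an endpoint $v$ of $s$ is capped by ``the uniqueness of the $S^*$-edge incident to $v_2$.'' You are right that for $t^{(0)}$ to enter $S^{(i)}$ at $v$, the edge $s$ must be displaced at its other endpoint $v_2$ (since $t^{(0)}$ itself occupies $v$ in $S^{(i)}$). But the displacing edge at $v_2$ need not be an $S_i^*$-edge: it can be any edge $e$ with $b_e \ge b_s$ that was blocked in equilibrium at \emph{its} other endpoint by some $s''\in S$, which is in turn displaced in deviation $i$, and so on. The cascade traces back to an $S_i^*$-edge that may lie arbitrarily far from $v_2$, so the local uniqueness at $v_2$ bounds nothing.

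Here is a concrete violation of your ``one indirect per endpoint'' breakdown. Take a path $u_0\,u_1\,u_2\,u_3\,u_4$ with edges $e_j=(u_{j-1},u_j)$ and bids $b_{e_j}=10+j$, plus $s=(u_0,v)$ with $b_s=10$; then $S=\{e_4,e_2,s\}$. Let player~$1$ own $r_1=(u_4,z_1)$ with $w_{r_1}=100$ and $t_1^*=(u_1,z_1')$ with $w_{t_1^*}=20$; let player~$3$ own $r_3=(u_2,z_3)$ with $w_{r_3}=100$ and $t_3^*=(u_0,z_3')$ with $w_{t_3^*}=20$. In both deviations the cascade (triggered by $r_1$, respectively $r_3$) knocks out $e_2$, lets $e_1$ enter, and $e_1$ then blocks $t_1^*$ (respectively $t_3^*$). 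Since $e_1\notin S$ and $e_1$ is blocked in equilibrium by $e_2$ at $u_1$, both $t_1^*$ and $t_3^*$ are indirectly charged to $e_2$ at the \emph{same} endpoint $u_1$. Neither $r_1$ nor $r_3$ is incident to either endpoint of $e_2$, so your proposed accounting cannot see them.

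The paper closes this gap not by bounding chain length, but by first proving a structural lemma: when the bids on an independent set $T$ are raised, the symmetric difference $S\triangle S'$ decomposes into bid-monotone paths in the exchange graph $G(S)$, each containing exactly one element of $T$ as its maximum. The charging map $\phi$ is then defined by \emph{walking along these paths}: for a lost $t^*$ one locates the relevant path and follows it in the direction of increasing bid to pick the target in $S$. It is the disjointness and monotonicity of the paths --- not any two-step locality --- that force $|\phi^{-1}(x)|\le 2$. Your $t^*\to t^{(0)}$ step is essentially the first hop of this walk, but without the path-decomposition lemma you have no control over how many different players' cascades funnel through the same $s$.
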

\begin{proof}
Consider a valuation profile $v$ and an action profile $a$ with corresponding bid profile $b$ and let $S$ be the set output by the greedy algorithm on action profile $a$. Suppose that each player $i$ deviates to bidding the pointwise maximum of $a_i$ and $a_i'=\left(\frac{w_t}{2}\right)_{t\in S^*_i}$. This is a valid swap deviation according to definition \ref{defn:swap-smooth}.  Let $a'=(a_i',a_{-i})$ be the action profile when player $i$ deviates and $b'$ the corresponding bid profile.  Let $S'$ be the set allocated after the deviation and let $S'_i=S' \cap \ground_i$.  

Write $A_i = S^*_i \cap S'_i$ and $U_i = S^*_i - S'_i$.  (So $A_i$ is the set of optimal items for agent $i$ that are allocated under $S'$, and $U_i$ are those that are unallocated). 
%
We have:
\[U_i^\M(a_i', a_{-i};v_i) \geq \sum_{t\in A_i}\left( w_t -\max\left\{\frac{w_t}{2},b_t\right\}\right)\geq \frac{1}{2} \sum_{t \in A_i} w_t- \sum_{t\in A_i}b_t\geq \frac{1}{2} \sum_{t \in A_i} w_t- \sum_{t\in S_i^*}b_t.\]  
So 
\[\sum_i U_i^\M(a_i', a_{-i};v_i) \geq \frac{1}{2} \sum_i \sum_{t \in A_i} w_t-\sum_i\sum_{t\in S_i^*}b_t=\frac{1}{2} \sum_i \sum_{t \in A_i} w_t-\sum_{t\in S^*}b_t.\] 
Since $S^*$ is a feasible allocation and $S$ is the outcome of the greedy algorithm on matchings, then by the well known $2$-approximation guarantee of the algorithm we have $\sum_{t\in S^*}b_t\leq 2\sum_{t\in S}b_t$. Combining with the inequality above, this yields
$\sum_i U_i^\M(a_i', a_{-i};v_i) \geq \frac{1}{2} \sum_i \sum_{t \in A_i} w_t-2\sum_{t\in S}b_t$.

We claim (proved below) that there exists a mapping $\phi \colon (\cup_i U_i) \to S$ such that $b_{\phi(t)} \geq \frac{1}{2}w_t$ for all $t \in (\cup_i U_i)$, and moreover $|\phi^{-1}(x)| \leq 2$ for all $x \in S^g$.  This will imply that $2 \sum_{x \in S}b_x \geq \frac{1}{2}\sum_i \sum_{t \in U_i} w_t$.  We can then establish the smoothness property:
\begin{align*} \sum_i U_i^\M(a_i', a_{-i};v_i) \geq~& \frac{1}{2} \sum_i \sum_{t \in A_i} w_t  -2 \sum_{x\in S}b_x+ \frac{1}{2} \sum_i \sum_{t \in U_i} w_t - 2 \sum_{x \in S}b_x \\
=~& \frac{1}{2} \opt(v) - 4 \sum_{x \in S}b_x
\end{align*}

\paragraph{Construction of charging mapping.} It remains to construct the promised mapping $\phi$. We first introduce the notion of an exchange graph for sets that are in the intersection of two matroids  $\matroid_1 = (\ground, I_1)$ and $\matroid_2 = (\ground, I_2)$:
\begin{defn}[Exchange Graph]\label{defn:exchange-graph} For a set $S \subset I_1 \cap I_2$, the exchange graph for $S$ is a directed bipartite graph $G(S)$ with node sets $S$ and $\ground-S$ such that: for $v \in S$ and $u \in \ground\backslash S$, we have edge $(v,u)$ if $S - v + u \in I_1$, and edge $(u,v)$ if $S - v + u \in I_2$.
\end{defn}

In a matching, a set of elements is in an independent set of matroid $\matroid_1$ if no two elements have the same left endpoint in the bipartite graph $(U,V,\ground)$, while it is an independent set of matroid $\matroid_2$, if no two elements have the same right endpoint. Then a feasible set of the mechanism can be viewed as the intersection of these two matroids. We now provide some extra properties of the exchange graph in the special case of a matching feasibility constraint. 

\begin{observation}\label{obs:T-to-S-uniqueness}
Given $S \in I_1 \cap I_2$ and $t \in \ground - S$, there is at most one $s \in S$ such that $(s,t) \in G(S)$ and at most one $s' \in S$ such that $(t,s') \in G(S)$.
\end{observation}

\begin{observation}\label{obs:S-to-T-uniqueness}
Given $S \in I_1 \cap I_2$, $T \in I_1 \cap I_2$, and $s \in S$, there is at most one $t \in T - S$ such that $(s,t) \in G(S)$ and at most one $t' \in T-S$ such that $(t',s) \in G(S)$.
\end{observation}

Next we argue about the structure of $S'$, by way of $G(S)$. We remind that $S$ is the greedy outcome under bid profile $b$ and $S'$ is the greedy outcome under bid profile $b'$ produced by action profile $a'=(a_i',a_{-i})$. We also remind that the bids in $b'$ are the same as the bids in $b$ with only some elements $T \in I_1 \cap I_2$ having an increased bid. 

\begin{lemma}\label{lem:matching-path-structure}
There exist paths $\pi_1, \dotsc, \pi_\ell$ in $G(S)$ such that:
\begin{enumerate}
\itemsep-0.3em
\item in each path, the bids $b_t'$ on the nodes are either monotonically increasing or decreasing,
\item in each path, the node with maximum bid from $b'$ is the unique element of $T-S$ on the path,
\item the paths are disjoint, except that each $t \in T$ could be the maximum-$b'$ element for at most one increasing (in $b'$) path and one decreasing (in $b'$) path, and
\item $S'$ is precisely $S$ with all nodes from $S \cap (\cup_i \pi_i)$ removed and all nodes from $S - (\cup_i \pi_i)$ added.
\end{enumerate}
\end{lemma}
\begin{proof}
Let $H_k$ be the top $k$ elements from $\ground$ with respect to bids $B'$ (breaking ties in the same manner as the greedy algorithm).  We will prove the stronger result that, for each $k$, our lemma holds for the sets $S \cap H_k$ and $S' \cap H_k$ (in item number 4).  The proof will be by induction on $k$.  Taking $k = |\ground|$ will then give the stated lemma.

For the base case $k=1$, let $x$ be the single element in $H_1$.  If $x \not\in T$ then $b_x = b_x'$ and hence $S \cap H_1 = \{x\} = S' \cap H_1$ so the result holds trivially.  If $x \in T$ and $x \in S$ then the result again holds trivially.  If $x \in T$ but $x \not\in S$, then we have $S \cap H_1 = \emptyset$ but $S' \cap H_1 = \{x\}$.  In this case, take path $P_1$ to be the singleton node $\{x\}$ to get the desired result.

Now consider $k > 1$.  By induction, there are paths $\pi_1, \dotsc, \pi_\ell$ with the required properties for $S \cap H_{k-1}$ and $S' \cap H_{k-1}$.  Let $x$ be the single element in $H_k - H_{k-1}$.  If $x$ is in both $S$ and $S'$, or in neither, then paths $\pi_1, \dotsc, \pi_\ell$ satisfy the required properties.  

Suppose $x \in S$ but $x \not\in S'$.  Then, since $x \not\in S'$, there exists some element $y \in S' - S$ such that $b_y' \geq b_x'$ and either $(x,y)$ or $(y,x)$ is in $G(S)$.  Assume $(x,y) \in G(S)$, as the other case is symmetric.  Since $y$ is considered before $x$ by the greedy algorithm on $a'$, we have $y \in H_{k-1}$.  So $y$ must lie on a path $\pi_i$.  From our earlier observation, there can be no $x' \in S$, $x' \neq x$, such that $(x',y) \in G(S)$.  Thus, either $y$ is the maximum-$a'$ element of $\pi_i$, $\pi_i$ is decreasing and no increasing path ending at $y$, or else $y$ is the endpoint of $\pi_i$ with lowest bid $a'$.  In either case, extending $\pi_i$ by appending $x$ retains the required properties of our paths.

Finally, suppose $x \not\in S$ but $x \in S'$.  If $x \in T$ then create a new path containing only the singleton $x$, and we are done.  Otherwise, there must be some element $y \in S - S'$ such that $b_y \geq b_x$ (and hence $b_y' \geq b_x'$, since $x \not\in T$) and either $(x,y)$ or $(y,x)$ is in $G(S)$.  Assume $(x,y) \in G(S)$, as the other case is symmetric.  This case is now similar to the previous case.  Since $y$ is considered before $x$ by the greedy algorithm on $a$, and hence on $a'$, we have $y \in H_{k-1}$.  So $y$ must lie on a path $\pi_i$.  From our earlier observation, there can be no $x' \in S'$, $x' \neq x$, such that $(x',y) \in G(S)$ (since $S'$ is in $I_1 \cap I_2$).  Thus $y$ is an endpoint of a path $\pi_i$. Since it cannot be the maximum-$a'$ endpoint of the path, it is the endpoint of $\pi_i$ with lowest-$a'$ bid.  In either case, extending $\pi_i$ by appending $x$ retains the required properties of our paths.

So, in all cases, the required paths exist for this value of $k$.  The result follows by induction.
\end{proof}

Finally, we argue about properties of elements not allocated by the greedy algorithm when their bids are increased.
As before, fix a bid profile $b$ and greedy outcome $S$ for $b$, and suppose bid profile $b'$ is $b$ with increased bids on some set of elements $T \in I_1 \cap I_2$.  Let $S'$ be the greedy outcome for $b'$.

\begin{lemma}\label{lem:matching-charging}
\label{lem.charge}
Suppose $t \in T - S'$.  Then $t$ is adjacent (in $G(S)$) to some $x \in S$ such that either 
\begin{enumerate}
\itemsep-0.3em
\item\label{cond:first} $b_x \geq b_t'$, or 
\item\label{cond:sec} $x \not\in S'$ and $x$ lies on a path $\pi_i$ (from Lemma \ref{lem:matching-path-structure}) with a neighbor $y\not\in T$ such that $b_y' \geq b_t'$, or
\item \label{cond:third} $x \not\in S'$ and $x$ has a neighbor $y\not\in T$ in $G(S)$ such that $y$ is the endpoint of a path $\pi_i$ (from the previous lemma) and $b_y' \geq b_t'$. Moreover, the path $\pi_i$ is increasing if $(x,t)\in G(S)$ and decreasing if $(t,x)\in G(S)$.
\end{enumerate}
\end{lemma}
\begin{proof}
If $t=(\alpha,\beta)\not\in S'$, there must exist some $y \in S'$ with $b_z' \geq b_t'$ and $y$ conflicting (i.e., shares a vertex with) with $t$ (wlog let it be vertex $\alpha$).  If $y \in S$ then take $x = y$ and we're done.  Let's assume that this is not the case and Condition \ref{cond:first} is not true.

Then, we have $y \in S' - S$, so our previous lemma states that $z$ lies on a path $\pi_i$ with the appropriate properties.  Since $y$ shares vertex $\alpha$ with $t$, we can take $x$ to be the element of $S$ that also shares this vertex.  Then $(x,y)$ and $(x,t)$ are in $G(S)$, and moreover $x \not\in S'$. If $y$ is the endpoint of path $\pi_i$ then we are done, since Condition \ref{cond:third} is satisfied. 

Otherwise, it must be that the path $\pi_i$ continues. We need to argue that the path is increasing and that $(x,y)$ is part of the path. If the path was decreasing then it means that there exists some edge $(x',y)$ such that $b_{x'}'\geq b_y'$. Thus $x'$ shares vertex $\alpha$ with $y$ and hence with $t$. Moreover, for this reason $x'\not\in T$ and therefore $b_x'=b_x$. Thus condition \ref{cond:first} is satisfied, with $x=x'$, a contradiction to our first assumption. So it must be that the path is increasing and hence an edge $(r,y)$ is part of the path. But from Observation \ref{obs:T-to-S-uniqueness} the only such $r$ is  $x$. Thus Condition \ref{cond:sec} is satisfied.
\end{proof}

We're now ready to define our promised mapping $\phi: \cup_i U_i\rightarrow S$.  Take $T = S^*_i$ in the above Lemmas.  Note $U_i = T - S'$.  For each $t \in T - S'$, if Condition \ref{cond:first} of Lemma \ref{lem.charge} is satisfied then take $\phi(t)=x$, for the $x$ in the condition.
If Conditions \ref{cond:sec} or \ref{cond:third} of Lemma \ref{lem.charge} are satisfied then we first create a temporary association, based on which we subsequently construct the mapping. If Condition \ref{cond:sec} of  Lemma \ref{lem.charge} is satisfied then we associate $t$ with the $x$ in the condition, whilst if Condition \ref{cond:third} is satisfied then we associate $t$ with the endpoint $y$. Then each $x \in S$ is associated with only one $t$, since each such $x$ lies on a unique path, and the neighbor from $T$ with which it's associated is determined by the direction of that path. Moreover, each endpoint of a path $y$ is associated with only one other node $t$, since $y$ lies on a unique path and the node $t$ with which it is associated is determined by the monotonicity of the path.

Now define $\phi(t)$ as follows: starting from the node $x$ associated with $t$, follow the path in the direction of increasing $b'$ until reaching some $x' \in S$ that is either (a) associated with some other node $t' \in T$, or (b) the last element of $S$ along the path.  In either case, we will set $\phi(t) = x'$.  Note that $b_{x'}' \geq b_t'$, since by construction $b_{x'}' \geq b_y' \geq b_t'$. Moreover, $x'\not\in T$ by the observation that $x'$ is adjacent to (and hence conflicts with) some $x\in T$ in graph $G(S)$. 
Thus $b_x=b_x'\geq b_t'$ as required.

We must argue that this mapping $\phi$ satisfies the required properties.
We already have that $b_{\phi(t)} \geq b_t'\geq \frac{w_t}{2}$ for each $t$.  We next argue that $|\phi^{-1}(x)| \leq 2$ for each $x \in S$.  This follows because each $x$ is mapped-to at most once for each of its (two) adjacent elements $t \in S^*$.  If $b_x$ is greater than $b_t'$, then $x$ is mapped-to directly from $t$.  If $b_x$ is less than $b_t'$ then $x$ can potentially be mapped to via an association with $t$ along the (at most one) path containing $x$, but only by one other element $t'$ (i.e., corresponding to the next-lowest element along that path that has an association). Hence, the mapping $\phi$ satisfies the required properties, completing the proof of Theorem \ref{thm:poa-greedy-matching}.
\end{proof}
\subsection{Intersections of Matroids and Optimal Algorithm}

We now analyze the mechanism where instead of running the greedy algorithm over the reported bid profile, we run the optimal algorithm to decide the outcome set. Then each player is charged his bid for his allocated set. We refer to this mechanism as the optimal mechanism with first prices. 

\begin{theorem}\label{thm:opt-intersection}
The optimal mechanism with first prices  is a $\left(\frac{1}{k+2},1\right)$-smooth mechanism when valuations are $\xos$ and the feasibility constraint on the ground set is the intersection of $k$ matroids. 
\end{theorem}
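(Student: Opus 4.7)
The plan is to mimic the proof of Lemma \ref{thm:poa-greedy}, modulated to account for the intersection of $k$ matroids rather than a single matroid. The two places this modification bites are (i) the exchange step, where we have to produce for each player $i$ a set $T_i \subseteq S$ such that $(S - T_i) \cup S_i^*$ is feasible in \emph{all} $k$ matroids simultaneously, and (ii) the bookkeeping at the end, where we must bound the total overlap of the $T_i$'s against the revenue. The main obstacle is (i): generalized Rota exchange only guarantees the required feasibility for a single matroid, so we must glue together $k$ applications of Rota's lemma without losing too much in the overlap count.

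By Lemma \ref{lem:sum-to-max}, it suffices to establish $(1/(k+2),1)$-smoothness when each $v_i$ is additive with weights $(w_t)_{t\in\ground_i}$. Fix such a profile, let $S^*$ be the welfare-optimal base and let $S_i^* = S^* \cap \ground_i$. Pick the deviation $a_i^* = (w_t/\alpha)_{t\in\ground_i}$, where $\alpha = (k+2)/(k+1)$ will be calibrated at the end. Given an action profile $a$ with bid profile $b$, let $S$ be the optimal allocation under $b$; for each player $i$, let $a' = (a_i^*, a_{-i})$, $b'$ the corresponding bid profile, and $S'$ the optimal allocation under $b'$.

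For the exchange, apply the Generalized Rota Exchange of Lemma \ref{lem:exchange} \emph{separately} to each of the $k$ matroids $\matroid_1,\ldots,\matroid_k$, with $A = S^*$, $B = S$, and the subsets $A_i = S_i^*$ (so $q=1$). This yields, for each matroid $\matroid_j$, disjoint sets $T_i^{(j)} \subseteq S$ with $S_i^* \cup (S - T_i^{(j)}) \in \I_j$. Setting $T_i = \bigcup_{j=1}^k T_i^{(j)}$, the set $(S - T_i) \cup S_i^*$ is a subset of $(S - T_i^{(j)}) \cup S_i^*$ and therefore independent in every $\matroid_j$, hence feasible. Moreover, since the $T_i^{(j)}$ are disjoint across $i$ for each fixed $j$, every element of $S$ lies in at most $k$ of the sets $T_1,\ldots,T_n$.

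Now carry out the chain of inequalities exactly as in the proof of Lemma \ref{thm:poa-greedy}. By optimality of $S'$ under $b'$ and feasibility of $(S - T_i) \cup S_i^*$,
\[
\sum_{t\in S'\cap \ground_i}\frac{w_t}{\alpha}+\sum_{t\in S'-\ground_i} b_t\;\geq\;\sum_{t\in S_i^*}\frac{w_t}{\alpha}+\sum_{t\in (S-T_i)-\ground_i}b_t,
\]
and by optimality of $S$ under $b$, $\sum_{t\in S}b_t \geq \sum_{t\in S'-\ground_i}b_t$. Combining and rearranging gives
\[
\sum_{t\in S_i'}\frac{w_t}{\alpha}\;\geq\;\sum_{t\in S_i^*}\frac{w_t}{\alpha}-\sum_{t\in S\cap T_i}b_t-\sum_{t\in S\cap \ground_i}b_t.
\]
Multiplying by $\alpha(1-1/\alpha)$ and summing over $i$, using that the $\ground_i$ partition $\ground$ and that each element of $S$ lies in at most $k$ of the $T_i$'s,
\[
\sum_i U_i^{\M}(a_i^*, a_{-i}; v_i)\;\geq\;\Bigl(1-\tfrac{1}{\alpha}\Bigr)\opt(v)-(\alpha-1)(k+1)\sum_{t\in S}b_t.
\]
Choosing $\alpha = (k+2)/(k+1)$ balances $(\alpha-1)(k+1) = 1$ and $1 - 1/\alpha = 1/(k+2)$, yielding the stated $(1/(k+2), 1)$-smoothness since $\sum_{t\in S}b_t = \rev^\M(a)$.
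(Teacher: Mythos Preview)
Your proof is correct and follows essentially the same approach as the paper's own proof: reduce to additive valuations via Lemma~\ref{lem:sum-to-max}, apply the Generalized Rota Exchange separately in each of the $k$ matroids and take the union $T_i=\bigcup_j T_i^{(j)}$ to obtain a feasible swap set with overlap at most $k$, then repeat the optimality-based chain of inequalities from Lemma~\ref{thm:poa-greedy} and calibrate $\alpha=(k+2)/(k+1)$. The only differences are cosmetic (you compress two displayed steps into one and invoke the partition of $\ground$ by the $\ground_i$ explicitly), so there is nothing substantive to add.
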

\begin{proof}
By Lemma \ref{lem:sum-to-max} it suffices to prove the theorem for additive valuations. Consider an additive valuation profile $v$. Suppose that each player $i$ deviates to $a_i^*=\left(\frac{w_t}{\alpha}\right)_{t\in \ground_i}$. Let $S^*$ be the optimal base for valuation profile $v$ and
$S_i^*=S^*\cap \ground_i$, be player $i$'s allocation in the optimal base. 

Consider an action profile $a$, where $a_i=(b_t)_{t\in \ground_i}$, and let $S$, be the selected set under action profile $a$. Let $a'=(a_i^*,a_{-i})$, be the induced action profile and $S'$ be the set allocated after the deviation and $S_i'=S'\cap \ground_i$. 

Suppose that the feasibility constraint on the elements of the ground set is the intersection of $k$ matroid constraints, $\matroid_1,\ldots,\matroid_k$. By Lemma \ref{lem:exchange} applied to every matroid $\matroid_t=\left(\ground,\I^t\right)$, we have that there exist disjoint sets $T_1^t,\ldots,T_n^t$ such that $S_i^*\cup (S- T_i^t)\in \I^t$. Thus it is easy to see that: $Q=S_i^*\cup \left(S- \cup_{t=1}^{k} T_i^t\right) \in \cap_{t=1}^{k}\I^t$ is a feasible set. Let $T_i=\cup_{t=1}^{k} T_i^t$, observe that 
since for each $t\in [k]$, $T_1^t,\ldots,T_n^t$, are disjoint sets, an element appears in at most $k$ of the sets $T_1,\ldots,T_n$.

The rest of the proof follows along similar lines as in theorem \ref{thm:poa-greedy}. By the optimality of the algorithm on the reported bid profile and since $Q=S_i^*\cup \left(S- T_i\right) $ is feasible, we have: 
\begin{align*}
 \sum_{t\in S_i'}\frac{w_t}{\alpha} + \sum_{t\in S'-S_i'}b_t\geq~& \sum_{t\in S_i^*}\frac{w_t}{\alpha} + \sum_{t\in S-T_i- \ground_i}b_t + \sum_{t\in (S\cap\ground_i) -T_i}\frac{w_t}{\alpha}\\
\geq~& \sum_{t\in S_i^*}\frac{w_t}{\alpha} + \sum_{t\in S-T_i-\ground_i}b_t
\end{align*}
By optimality of the algorithm on the initial bid profile we have:
\begin{equation*}
\sum_{t\in S}b_t=W(S,a) \geq W(S',a) = \sum_{t\in S'}b_t \geq \sum_{t\in S'-S_i'}b_t
\end{equation*}
Combining we get:
\begin{align*}
 \sum_{t\in S_i'}\frac{w_t}{\alpha} \geq~& \sum_{t\in S_i^*}\frac{w_t}{\alpha} + \sum_{t\in S-T_i- \ground_i}b_t -  \sum_{t\in S}b_t\\
=~&\sum_{t\in S_i^*}\frac{w_t}{\alpha} - \sum_{t\in S\cap (T_i \cup \ground_i)}b_t \\
\geq~&\sum_{t\in S_i^*}\frac{w_t}{\alpha} - \sum_{t\in S\cap T_i }b_t - \sum_{t\in S\cap \ground_i}b_t
\end{align*}
Observe that by definition the utility of the player under the deviation is: $U_i^\M(a_i^*,a_{-i}; v_i) = \left(1-\frac{1}{\alpha}\right)\sum_{t\in S_i'}w_t$. Using the previous inequalities we can lower bound his utility as follows:
\begin{align*}
U_i^\M(a_i^*,a_{-i}; v_i)=~& \left(1-\frac{1}{\alpha}\right)\sum_{t\in S_i'}w_t\\
\geq~& \left(1-\frac{1}{\alpha}\right)\sum_{t\in S_i^*}w_t -\left(1-\frac{1}{\alpha}\right) \cdot \alpha\cdot\left(\sum_{t\in S\cap T_i }b_t + \sum_{t\in S\cap \ground_i}b_t\right)
\end{align*}
Summing over all players:
\begin{align*}
\sum_i U_i^\M(a_i^*,a_{-i}; v_i) \geq~&\left(1-\frac{1}{\alpha}\right)\opt(v) - \left(1-\frac{1}{\alpha}\right) \cdot \alpha\cdot\left(\sum_i \sum_{t\in S\cap T_i}b_t +\sum_{i}\sum_{t\in S\cap \ground_i} b_t\right)\\
\geq~& \left(1-\frac{1}{\alpha}\right)\opt(v)  -(\alpha-1)\cdot (k+1)\sum_{t\in S}b_t
\end{align*}
where the last inequality follows, since an element $t\in \ground$ appears in at most $k$ of the sets $T_1,\ldots,T_n$ and thereby $\sum_{i} \sum_{t\in S\cap T_i} b_t\leq k\sum_{t\in S}b_t$.
By setting $\alpha = \frac{1}{k+1}+1=\frac{k+2}{k+1}$, yields the result.
\end{proof}

\bibliographystyle{acmsmall}
\bibliography{thesis-bib}

\begin{appendix}

\section{Omitted Proofs}

\subsection{Polymatroids}\label{sec:app-matroids}

\begin{proofof}{Theorem \ref{thm:polymatroid}}
It is well-known (see e.g. Bikhchandani et al. \cite{Bikhchandani2011} or Schrijver \cite{Schrijver2003}) that for a sufficiently small discretization of the allocation space in $\delta$ units, if we consider the extended ground set where each element $t$, is duplicated $\frac{f(\{t\})}{\delta}$ times, then the feasibility constraint implied by the polymatroid on these extended element set is a matroid. Moreover, if we denote with $(t,k)$ the $k$-th copy of element $t$, then if we assign a weight of $b_t\cdot \delta$ to each element $(t,k)$, then as the discretization goes to zero, the greedy algorithm on the matroid corresponds to the greedy algorithm on the polymatroid. Subsequently the payment of the polymatroid mechanism coincides with the payment of the extended matroid mechanism, when run on bids $b_t\cdot \delta$ for each copy of $t$. Last if we denote with $w_t'=w_t\cdot \delta$, then the value of a player for an allocation of discretized units, corresponds to the value of a player in the discretized matroid that has value $w_t'$ for each copy of element $t$.

Thus we can view the polymatroid mechanism as the limit of a matroid mechanism where the players are restricted to submit the same bid on all copies of the same element. If we show smoothness of this restricted bid mechanism, then the smoothness of the polymatroid mechanism will follow by taking the limit of the discretization to zero.

In order to show smoothness, the only thing we need to observe is that the deviations used in the smoothness proof of the matroid mechanism (Lemma \ref{thm:poa-greedy}) are simply scaled versions of the valuation of a player. Thus it is easy to see, that if such scaled versions are allowed in the restricted bid space, then the same proof shows smoothness of the matroid mechanism under the restricted bid space. This is formalized in the following observation.
\begin{observation}
Suppose that a mechanism $\M$ is $(\lambda,\mu)$-smooth. Consider the mechanism $\M'$, which is identical to $\M$, with the exception that the action space of each player is restricted to some subset $\A_i'\subset \A_i$. If every action in the support of the deviations $\al_i^*(v)$ used to show smoothness of $\M$, fall into action space $\A_i'$, then $\M'$ is also $(\lambda,\mu)$-smooth.
\end{observation}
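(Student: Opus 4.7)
The plan is to verify the smoothness condition of Definition~\ref{def:smooth-mech} for $\M'$ directly, by re-using the same deviations $\al_i^*(v)$ that witness smoothness of $\M$. Since smoothness is a universal statement over action profiles and the restricted mechanism only has \emph{fewer} action profiles to worry about, the statement is essentially tautological once we confirm that the witness deviations remain admissible under the restriction.

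First, I would fix any valuation profile $v \in \V$ and take the randomized deviations $\al_i^*(v)$ from the $(\lambda,\mu)$-smoothness proof of $\M$. By hypothesis, the support of each $\al_i^*(v)$ lies entirely in $\A_i'$, so $\al_i^*(v)$ is a valid (randomized) action in $\M'$ as well. Next, I would fix an arbitrary action profile $a \in \A' = \A_1' \times \cdots \times \A_n' \subseteq \A$. Because $\M'$ differs from $\M$ only in the action space and not in the allocation rule $X$ or the payment rule $P$, and because both $\al_i^*(v)$ and $a_{-i}$ have support in $\A'$, the per-player utilities and revenue agree on this profile:
\[
U_i^{\M'}(\al_i^*(v), a_{-i}; v_i) \;=\; U_i^{\M}(\al_i^*(v), a_{-i}; v_i), \qquad \rev^{\M'}(a) \;=\; \rev^{\M}(a).
\]

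Finally, since $a \in \A$, the $(\lambda,\mu)$-smoothness of $\M$ yields
\[
\sum_{i \in [n]} U_i^{\M}(\al_i^*(v), a_{-i}; v_i) \;\geq\; \lambda \cdot \opt(v) \;-\; \mu \cdot \rev^{\M}(a).
\]
Substituting the identities from the previous step gives the matching inequality for $\M'$, establishing $(\lambda,\mu)$-smoothness of $\M'$ with the same deviations. The only ``step'' that could fail is the requirement that the witness deviations be admissible in the restricted space, which is precisely the hypothesis of the observation; consequently there is no real technical obstacle, and the argument reduces to unfolding the definition and invoking the hypothesis.
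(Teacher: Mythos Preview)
Your proof is correct and matches the paper's intent: the paper states this as an observation without proof, and your argument---verifying that the same witness deviations satisfy the smoothness inequality on the smaller action space, using that $X$, $P$, and $\opt(v)$ are unchanged---is exactly the straightforward unfolding of the definition that the observation invites.
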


Since we assumed that the value of a player is additive and homogeneous, observe that the weight $w_t'$ of a player for each element of the discretized matroid is identical for all copies of element $t$. Thus the smoothness deviations of Lemma \ref{thm:poa-greedy} would correspond to bidding $\frac{w_t'}{\alpha}$ for each element of $t$, which is an action that belongs to the restricted strategy space. Thus the matroid mechanism is smooth even under this restricted space, as long as the value of a player for all copies of an element is identical. Hence, the theorem follows.
\end{proofof}

\begin{proofof}{Corollary \ref{cor.polymatroid.2}}
We first recall the submodular valuations setting.  Suppose that each player's value $v_i(x_i)$ is a monotone submodular function on the euclidean lattice defined on $\R_+^{|\ground_i|}$. Then from \cite{Syrgkanis2013}, we know that it can be expressed as: 
$v_i(x_i) = \max_{\ell \in \El} \sum_{t\in \ground_i} v_t(x_t)$,
with $v_t(\cdot)$ being an increasing concave function. Since, it is easy to see that every increasing concave function can be expressed as the maximum of functions that are linear up to a point and then constant: i.e. $v_t(x_t) =\max_{\ell \in \El} w_t^{\ell}\min\{ x_t,  q_t^{\ell}\}$, we can conclude that any submodular valuation can be written as
$v_i(x_i) = \max_{\ell \in \El} \sum_{t\in \ground_i}  w_t^{\ell}\cdot \min\{x_t, q_t^{\ell}\}$.
for some index set $\El$. Thus in order to prove smoothness of the polymatroid mechanism, by Lemma \ref{lem:sum-to-max} it suffices to show smoothness for the following much simpler class of valuations:
$v_i(x_i) = \sum_{t\in \ground_i}  w_t\cdot \min\{x_t, q_t\}$.

However, we readily observe that if we consider an arbitrarily small discretization of the polymatroid then the valuations of the players for the copies of an element $t$, will not be identical. Instead, if we consider some arbitrary order of the copies, then the player will have a value of $w_t\cdot \delta$ for the first $\frac{q_t}{\delta}$ copies and zero value for subsequent copies.
Thus to render the polymatroid mechanism smooth, we need to allow for the player to express such valuations for the copies of the same element. To achieve this 
we introduce the modification of the polymatroid mechanism under which the player can also submit allocation capacities for each good.

If we consider an arbitrarily small discretization of the polymatroid, then the player can submit a valuation that is $\frac{w_t\cdot \delta}{\alpha}$, for the first $\frac{q_t}{\delta}$ copies of element $t$ and zero for the remaining, by simply submitting a weight of $\frac{w_t}{\alpha}$ and a capacity of $q_t$, to the polymatroid mechanism with capacities.
\end{proofof}

\end{appendix}

\end{document}